\newtheorem{theorem}{Theorem}[section]
\newtheorem{lemma}[theorem]{Lemma}
\newtheorem{proposition}[theorem]{Proposition}
\theoremstyle{definition}
\newtheorem{definition}[theorem]{Definition}
\newtheorem{example}[theorem]{Example}
\theoremstyle{remark}
\newtheorem{remark}[theorem]{Remark}
\numberwithin{equation}{section}
\newcommand{\Real}{{\mathbb R}}
\newcommand{\Z}{\mathbb Z}
\newcommand{\eps}{\varepsilon}
\newcommand{\x}{\mathbf{x}}
\newcommand {\hide}[1]{}
\begin{document}
\title[Topological lower bounds for arithmetic networks]
{Topological lower bounds for arithmetic networks}

\author{Andrei Gabrielov}
\address{Department of Mathematics,
Purdue University, West Lafayette, IN 47907, USA}
\email{agabriel@math.purdue.edu}
\author{Nicolai Vorobjov}
\address{
Department of Computer Science, University of Bath, Bath
BA2 7AY, England, UK}
\email{nnv@cs.bath.ac.uk}

\begin{abstract}
We prove that the depth of any arithmetic network for deciding membership in a semialgebraic
set $\Sigma \subset \Real^n$ is bounded from below by
$$c_1 \sqrt{ \frac{\log ({\rm b}(\Sigma))}{n}} -c_2 \log n,$$
where ${\rm b}(\Sigma)$ is the sum of the Betti numbers of $\Sigma$ with respect to ``ordinary'' (singular) homology,
and $c_1,\ c_2$ are some (absolute) positive constants.
This result complements the similar lower bound by Monta\~na, Morais and Pardo \cite{Mon2} for
{\em locally closed} semialgebraic sets in terms of the sum of {\em Borel-Moore} Betti numbers.

We also prove that if $\rho:\> \Real^n \to \Real^{n-r}$ is the projection map, for some $r=0, \ldots , n$, then
the depth of any arithmetic network deciding membership in $\Sigma$ is bounded by
$$\frac{c_1\sqrt{\log ({\rm b}(\rho(\Sigma)))}}{n} - c_2 \log n$$
for some positive constants $c_1,\ c_2$.
\end{abstract}
\maketitle

\section{Introduction}

An {\em arithmetic network} is a computational model aimed to capture the idea of a parallel computation
in its simplest form.
It was first proposed by J. von zur Gathen \cite{Gathen}.

We will be dealing with a special class of networks, called {\em decision} arithmetic networks.
The next definition follows \cite{Gathen} and \cite{Mon1}.
It combines an arithmetic circuit (straight-line program) and a Boolean circuit by means of
two special types of gates: sign and selection.

\begin{definition}\label{def:network}
A decision arithmetic network $\mathcal N$ over $\Real$ is an acyclic directed graph with vertices
(called ``gates'') of the following four types, classified by the indegree.
\begin{enumerate}
\item
Gates of indegree 0 are either {\em input gates} or {\em constant gates}.
The latter are labelled by real numbers.
Input and constant gates have the {\em output type} of a real number.
\item
Gates of indegree 1 are either the (unique) {\em output gate}, or {\em sign gates} labelled by one of
the signs $<,=,>$, or {\em Boolean gates} labelled by $\neg$.
Output gate has the Boolean input type (i.e., {\bf true} or {\bf false}) and the outdegree 0.
Sign gates have input type of a real number and output type Boolean.
Boolean gates labelled by $\neg$ have input type Boolean and output type Boolean.
\item
Gates of indegree 2 are either {\em arithmetic gates}, labelled by one of arithmetic operations
$+, -, \times$, or {\em Boolean gates}, labelled by one of Boolean operations $\lor, \land$.
Arithmetic gates have input and output types both of a real number, Boolean gates have input and output
types both Boolean.
\item
Gates of indegree 3 are {\em selection gates}.
In each such gate two of its inputs have the type of a real number while the third input has the Boolean type which comes
from a Boolean or a sign gate.
Output type is of a real number.
(See Fig.~4, on which $g,\ h$ and $f$ are real numbers, while $A$ is a Boolean value.)
\end{enumerate}

The outdegrees for all gates, except output ones, can be arbitrary positive numbers.

The operational semantics of $\mathcal N$ is clear, except maybe the functioning of the sign and selection gates.
Let $v$ be a sign gate labelled by a sign $\sigma \in \{ <,=,> \}$.
Then for an input $f \in \Real$ of $v$, the output of $v$ is {\bf true} if $f\ \sigma\ 0$ and {\bf false} otherwise.
Let $w$ be a selection gate, and $(g,h, {\bf b}) \in \Real \times \Real \times \{ {\bf true}, {\bf false} \}$
be its input.
Then the output of $w$ is $g$ if ${\bf b}={\bf true}$ and $h$ otherwise.

Let the number of input gates be $n$, and let $(x_1, \ldots ,x_n) \in \Real^n$ be a particular input.
We say that $\mathcal N$ {\em accepts} $(x_1, \ldots, x_n)$ if the Boolean value at the output gate is
{\bf true}.
\end{definition}

We interpret the {\em complexity (parallel time)} of $\mathcal N$ as its {\em depth}, defined as follows.

\begin{definition}
The {\em size} $s(\mathcal N)$ of an arithmetic network $\mathcal N$ is the number of all its gates.
The {\em depth} $d(\mathcal N)$ of $\mathcal N$ is the length of the longest (directed) path from
some input gate to the output gate.
Each gate $v$ of $\mathcal N$ also has the {\em depth}, which is the length of the longest path from
some input gate to $v$.
\end{definition}

It is clear that $s(\mathcal N) \le 3^{d(\mathcal N)+1}$.
We will be mostly interested in lower bounds for $d(\mathcal N)$.

\begin{remark}
In \cite{BC, Mon2} one can find an equivalent definition of an arithmetic network, which combines
a straight-line program with ``sign gates''.
Such ``sign gate'' takes an input of the type of a real number and outputs 1 if the input is positive
and 0 otherwise.
Boolean, sign, and selection gates from Definition~\ref{def:network} can be modelled by arithmetic and ``sign gates''
in the alternative model in a straightforward way.
For our purposes this encoding places an additional layer on the semantics and does not appear to
give any advantages.
Two models can simulate each other with respect to both size and depth within a constant factor \cite{BC}.
\end{remark}

In Proposition~\ref{pr:semialg-to-network} below, we will associate with each arithmetic network $\mathcal N$,
having $n$ input gates, a formula $B(\mathcal N)$ of the first order theory of the reals, such that $\mathcal N$
accepts exactly all elements in the semi-algebraic set $\Sigma \subset \Real^n$ defined by $B(\mathcal N)$.
We say that $\mathcal N$ {\em tests membership} in $\Sigma$.
For brevity, in what follows, we will refer to $B(\mathcal N)$ (and to any other first order formula) as to {\em Boolean} formula.

The first topological lower bound for depths of arithmetic networks appeared in \cite{Mon1}.
\begin{proposition}[\cite{Mon1}]\label{pr:Mon1}
Let $\mathcal N$ be an arithmetic network testing membership in a semi-algebraic set
$\Sigma \subset \Real^n$.
Then
$$
d({\mathcal N}) = \Omega \left( \sqrt{ \frac{\log ({\rm b}_0(\Sigma))}{n}} \right),
$$
where ${\rm b}_0(\Sigma)$ is the number of connected components of $\Sigma$.
\end{proposition}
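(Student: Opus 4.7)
The plan is to associate to $\mathcal N$ a quantifier-free first-order formula $B(\mathcal N)$ defining $\Sigma$---this is exactly what Proposition~\ref{pr:semialg-to-network} (to be established later in the paper) provides---and then to apply a classical Ole\u\i nik-Petrovsky-Thom-Milnor upper bound on the number of connected components of a semialgebraic set in terms of the number and degrees of the atomic polynomials of its defining formula.

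Set $d = d(\mathcal N)$. Restricting to the subnetwork of gates reachable from the output backwards, I may assume the size $s(\mathcal N)$ is bounded by $(3^{d+1}-1)/2$, since every gate has indegree at most $3$ and the longest directed path from an input has length at most $d$. The atomic polynomials appearing in $B(\mathcal N)$ are precisely the real-valued inputs to the sign gates, so there are at most $s(\mathcal N)=O(3^d)$ of them. Each such polynomial is computed by an arithmetic subcircuit of depth at most $d$, and by an immediate induction on depth (multiplications at most double the degree; additions and selections do not increase it) every such polynomial has degree at most $D\leq 2^d$.

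Next I would invoke the Thom-Milnor bound in the following quantitative form: the number of connected components of a semialgebraic set in $\Real^n$ defined by any Boolean combination of sign conditions on $s$ polynomials of degree at most $D$ is bounded by $(O(sD))^n$. Substituting the bounds above,
$$
{\rm b}_0(\Sigma) \leq \bigl(O(3^d \cdot 2^d)\bigr)^n = 2^{O(n d)},
$$
and taking logarithms gives $d = \Omega(\log({\rm b}_0(\Sigma))/n)$, from which the stated (weaker) inequality $d = \Omega(\sqrt{\log({\rm b}_0(\Sigma))/n})$ follows at once.

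The main obstacle is Proposition~\ref{pr:semialg-to-network} itself, namely translating $\mathcal N$ into a genuinely quantifier-free formula with the claimed degree and polynomial-count bounds. The subtle point is the selection gate, whose real-valued output is not a polynomial in the inputs but only a piecewise-polynomial expression determined by the Boolean third input. I would handle this by propagating case splits symbolically up through the network: each intermediate real value is represented as a finite family of polynomial expressions, one for every sign pattern of the sign gates of strictly smaller depth, and $B(\mathcal N)$ is then formed as the disjunction over accepting global sign patterns of the conjunctions of the corresponding sign conditions. Tracking degrees through this recursion preserves the bound $D\leq 2^d$ used above, and the polynomial count is still controlled by $s(\mathcal N)$, so the Thom-Milnor estimate applies unchanged.
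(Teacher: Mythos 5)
There is a genuine gap, and it sits exactly at the point you flag as ``subtle'' and then wave away: the count of atomic polynomials in $B(\mathcal N)$. The real-valued input to a sign gate is not a polynomial but a piecewise polynomial, and expanding $f\ \sigma\ 0$ as in Definition~\ref{def:pieceboolean} produces one atomic polynomial per piece. The number of pieces is \emph{not} controlled by $s(\mathcal N)$: an arithmetic gate whose two parents carry piecewise polynomials with $k_1$ and $k_2$ pieces carries one with up to $k_1k_2$ pieces (the common refinement of the two partitions), and the polynomials on distinct pieces are genuinely distinct sums/products; iterating this through depth $d$ a single gate can carry on the order of $2^{2^{O(d)}}$ distinct polynomials. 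Your own patch --- ``one polynomial expression for every sign pattern of the sign gates of strictly smaller depth'' --- already concedes the problem, since the number of such sign patterns is $3^{\#\{\text{sign gates}\}}=2^{O(3^{d})}$, not $O(3^{d})$. Feeding the honest count $s=2^{2^{O(d)}}$ into the Thom--Milnor bound $(O(sD))^{n}$ yields only $d=\Omega(\log\log ({\rm b}_0(\Sigma)))$ (up to the dependence on $n$), exponentially weaker than the statement. The degree bound $D\le 2^{d}$ is fine; it is the polynomial count that breaks.

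The missing idea is precisely what produces the square root in the bound (its presence in the statement is itself a warning that a single global application of Thom--Milnor cannot suffice, since that would give a bound linear in $\log({\rm b}_0)/n$). One must bound the number of \emph{realizable} sign patterns level by level: within each cell of the partition of $\Real^{n}$ determined by the sign gates of depth at most $\ell$, the polynomials fed to the sign gates of depth $\ell+1$ are fixed, there are at most $3^{d+1}$ of them of degree at most $2^{d}$, so by the Ole\u{\i}nik--Petrovsky--Thom--Milnor/Warren bound on realizable sign conditions each cell splits into at most $2^{O(nd)}$ subcells. Iterating over the $d$ levels gives at most $2^{O(nd^{2})}$ cells in total, each defined by at most $O(d\cdot 3^{d})$ polynomials of degree at most $2^{d}$ and hence contributing at most $2^{O(nd)}$ connected components, whence ${\rm b}_0(\Sigma)\le 2^{O(nd^{2})}$ and $d=\Omega(\sqrt{\log({\rm b}_0(\Sigma))/n})$. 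This is essentially the argument of \cite{Mon1}; note the paper itself only cites the proposition without proof. Your stronger intermediate claim ${\rm b}_0(\Sigma)\le 2^{O(nd)}$ is not established by your argument and should not be asserted.
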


This result was then partly expanded in \cite{Mon2} as follows.

\begin{proposition}[\cite{Mon2}]\label{pr:Mon2}
Let $\mathcal N$ be an arithmetic network testing membership in a locally closed semi-algebraic set
$\Sigma \subset \Real^n$.
Then
\begin{equation}\label{eq:Mon2}
d({\mathcal N}) = \Omega \left( \sqrt{ \frac{\log ({\rm b}^{BM}(\Sigma))}{n}} \right),
\end{equation}
where ${\rm b}^{BM}(\Sigma)$ is the sum of Borel-Moore Betti numbers of $\Sigma$.
\end{proposition}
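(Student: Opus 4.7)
The strategy follows the Monta\~na--Morais--Pardo blueprint for Proposition~\ref{pr:Mon1}, replacing ``number of connected components'' by ``sum of Borel--Moore Betti numbers.''  To a depth-$d$ network $\mathcal N$ one associates the first-order formula $B(\mathcal N)$ defining $\Sigma$ via Proposition~\ref{pr:semialg-to-network}.  The first task is to eliminate selection gates by branching on the Boolean value of their conditioning input and collecting, on each branch, the straight-line program feeding the output.  A recursion on depth rewrites $B(\mathcal N)$ as an equivalent quantifier-free Boolean combination of sign conditions on polynomials $p_1,\ldots,p_N\in\Real[x_1,\ldots,x_n]$, with
\[
N\le 2^{O(d^2)},\qquad \deg p_i\le D\le 2^{O(d)}.
\]
The doubly-exponential bound on $N$ arises because at each of the $O(d)$ levels of the network one may have to refine the current case decomposition according to the signs of all previously produced polynomials.

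\noindent\textbf{Topological bound.}  Local closedness of $\Sigma$ is used to write $\Sigma=S_1\setminus S_2$ with $S_1\supseteq S_2$ closed semialgebraic sets, each defined by a Boolean combination of sign conditions on the same polynomials $p_1,\ldots,p_N$.  Applying the Oleinik--Petrovsky--Thom--Milnor inequality in the sharp Basu--Pollack--Roy form to $S_1$ and $S_2$, and feeding these estimates into the long exact sequence of Borel--Moore homology of the pair $(S_1,S_2)$, yields
\[
{\rm b}^{BM}(\Sigma) \;\le\; (ND)^{O(n)} \;\le\; 2^{O(nd^2)}.
\]
Taking logarithms and solving for $d$ produces the desired lower bound on $d(\mathcal N)$.

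\noindent\textbf{Main obstacle.}  The heart of the argument is the single-exponential bound on \emph{Borel--Moore} Betti numbers.  The standard Milnor--Thom-type estimates are formulated for closed semialgebraic sets, so one must pass from $\Sigma$ to its closure and excise the added boundary.  Local closedness makes this passage clean: both $\overline{\Sigma}$ and $\overline{\Sigma}\setminus\Sigma$ are closed semialgebraic sets definable using only the original $p_1,\ldots,p_N$, and the pair's long exact sequence keeps the total Betti count within the same $(ND)^{O(n)}$ envelope.  Without local closedness, $\overline{\Sigma}\setminus\Sigma$ is no longer closed, the reduction collapses, and one is forced to develop the more refined toolkit that is the subject of the present paper.
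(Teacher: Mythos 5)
First, a point of reference: the paper does not prove Proposition~\ref{pr:Mon2} at all --- it is quoted from \cite{Mon2} and used as a black box --- so your proposal can only be measured against the argument of Monta\~na, Morais and Pardo, whose one essential ingredient the introduction of the present paper explicitly names: the \emph{subadditivity} of Borel--Moore Betti numbers over a partition into locally closed pieces. Your outline has the right global shape (reduce the accepted set to a Boolean combination of sign conditions on polynomials of degree $2^{O(d)}$, bound the Betti numbers by an Oleinik--Petrovsky--Thom--Milnor estimate, solve for $d$), and the target inequality ${\rm b}^{BM}(\Sigma)\le 2^{O(nd^2)}$ is the correct one. But the two steps you yourself flag as the heart of the matter both have genuine gaps.

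The first gap: the claim that $\overline{\Sigma}$ and $\overline{\Sigma}\setminus\Sigma$ are ``definable using only the original $p_1,\ldots,p_N$'' is false. The closure of a set given by a Boolean combination of sign conditions on a family of polynomials is in general not a Boolean combination of sign conditions on the same family: for $p=x^3-x^2-y^2$ the closure of $\{p>0\}$ is $\{p\ge 0\}$ with the isolated zero at the origin removed, which is none of the Boolean combinations of sign conditions on $p$. Producing defining polynomials for $\overline{\Sigma}$ by quantifier elimination inflates the degrees and the number of polynomials by quantities exponential in $n$, which destroys the $\sqrt{n}$ in the denominator of the final bound. The second gap: the count $N\le 2^{O(d^2)}$ is unjustified --- resolving the selection gates syntactically can produce doubly exponentially many polynomials, and the honest bound on the number of \emph{realizable} pieces, obtained by refining the cell decomposition level by level and invoking a Milnor--Thom count of realizable sign vectors at each of the $d$ levels, is $2^{O(nd^2)}$, with an $n$ in the exponent. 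Feeding $N=2^{O(nd^2)}$ into a single global estimate $(ND)^{O(n)}$ yields only $2^{O(n^2d^2)}$, hence the weaker conclusion $d(\mathcal N)=\Omega\bigl(\sqrt{\log ({\rm b}^{BM}(\Sigma))}/n\bigr)$. Both gaps are repaired by the same device, which is the actual engine of \cite{Mon2} (and of \cite{Yao} for trees): partition the accepted set into the $2^{O(nd^2)}$ nonempty sign cells determined by the sign gates; each \emph{single} cell is basic locally closed and genuinely is a difference of two closed sets cut out by non-strict conditions on the same polynomials, so its Borel--Moore Betti numbers are at most $2^{O(nd)}$; then sum over the cells using the iterated Borel--Moore long exact sequence. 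It is this cell-by-cell subadditivity --- not one pair $(S_1,S_2)$ for all of $\Sigma$ --- that requires $\Sigma$ to be locally closed and that yields ${\rm b}^{BM}(\Sigma)\le 2^{O(nd^2)}$.
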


Note that for {\em compact} sets, Borel-Moore homologies coincide with ``ordinary'' singular homologies,
in particular, Betti numbers coincide (see, e.g., \cite{Mon2}).
For locally closed sets, Borel-Moore and singular Betti numbers are incomparable.
The proofs of bounds in Propositions~\ref{pr:Mon1} and \ref{pr:Mon2} depend heavily on subadditivity
property of Borel-Moore Betti numbers which is not generally valid for singular Betti numbers.

Lower bounds for the membership in locally closed semi-algebraic sets with respect to the algebraic computation tree
(a sequential model), in terms of Borel-Moore Betti numbers, were obtained in \cite{Yao}.

In this paper we present two main results.
Theorem~\ref{th:general} states a lower bound similar to (\ref{eq:Mon2}) for {\em arbitrary} (not necessarily
locally closed) semi-algebraic sets in terms of singular homology.
Theorem~\ref{th:proj} suggests a lower bound of a new type, in terms of singular homology Betti numbers
of a {\em projection} of a semi-algebraic set to a coordinate subspace.
Note that the topology of the image under a projection may be much more complex than the
topology of the set being projected.
We are not aware of previous lower bounds of this sort.
This generalization comes at a price of lowering the bound (as compared to Theorem~\ref{th:general}), namely
the term $n$, rather than $\sqrt{n}$, appears in the denominator.

Note that lower bounds for membership in arbitrary semialgebraic sets with respect to algebraic computation
trees in terms of singular homology were obtained in \cite{GV14}.

\section{Further properties of arithmetic networks}

In this section we establish some properties of arithmetic networks, which we need further in the paper.

\subsection{Associating Boolean formula to a network}\label{sec:denotational}
We will now associate with each arithmetic network $\mathcal N$ a Boolean formula $B(\mathcal N)$.
This semantics follows \cite{Gathen}, and is simplified and adjusted for our modification of arithmetic networks.

\begin{definition}\label{def:piecewise}
Consider a finite partition of a semi-algebraic set $S$ into semi-algebraic sets $S_1, \ldots, S_k$.
A {\em partial piecewise polynomial function} (or just {\em partial piecewise polynomial}) {\em with respect to the partition},
$f:\> S \to \Real$, coincides
on each $S_i$ with the restriction of some polynomial function $f_i:\> \Real^n \to \Real$ to $S_i$.
If $S= \Real^n$ we drop the expression ``partial'' in this definition.
Arithmetic operations with partial piecewise polynomials having the same domain, and predicates,
$f\ \sigma\ 0,\ f\ \cancel{\sigma}\ 0$, where $\sigma \in \{ <,=,> \}$
are defined in a usual way, as operations and predicates on functions.
\end{definition}

\begin{definition}\label{def:pieceboolean}
Let $f$ be a partial piecewise polynomial function with the partition $S_1, \ldots, S_k$ such that for each $i$
the set $S_i$ is defined by a Boolean formula $B_i$ with atomic subformulae of the kind $g\ \sigma\ 0$,
where $\sigma \in \{ <,=,> \}$, and $g$ is a polynomial.
We use the same notation $f\ \sigma\ 0$ also for the Boolean formula
$$
((f_1\ \sigma\ 0) \land B_1) \lor \cdots \lor ((f_k\ \sigma\ 0) \land B_k),
$$
which describes the predicate $f\ \sigma\ 0$.
We also say that the partial piecewise polynomial $f$ is described by the list $f_1, \ldots, f_k; B_1. \ldots,B_k$.
\end{definition}

We now describe the Boolean formula $B(\mathcal N)$ associated to a network $\mathcal N$.
We associate a polynomial (namely, a variable) to each input gate
and a Boolean formula to each sign gate, each Boolean gate and to the output gate.
With each selection gate and each arithmetic gate we associate a piecewise polynomial.

We perform these associations by induction on the depth of the gate as follows.

Input gates are assigned their input variables, while constant gates -- their constants.
This completes the base of the induction.

Now we perform the induction step.

\begin{itemize}
\item
If $v$ is an arithmetic gate labelled by $\ast \in \{+,-, \times \}$, then it has two parents.
Parents may be either arithmetic or selection or input or constant gates in any combination.
In any case, the parents have associated piecewise polynomials, say $f$ and $g$.
Associate with $v$ the piecewise polynomial $f \ast g$.
\item
If $v$ is a Boolean gate labelled by $\dag \in \{ \lor, \land \}$, then it has two parents, either sign or
Boolean gates, in any combination, with associated Boolean formulae, say $A$ and $B$.
Associate with $v$ the Boolean formula $A\ \dag\ B$.
\item
If $v$ is a Boolean gate labelled by $\neg$, then it has one parent, either a sign or a Boolean gate, with
associated Boolean formula, say $B$.
Associate with $v$ the Boolean formula $\neg B$.
\item
If $v$ is a sign gate labelled by $\sigma \in \{<,=,> \}$, then it has either one arithmetic or one selection gate
parent with associated piecewise polynomial, say $f$.
Associate with $v$ the Boolean formula $f\ \sigma\ 0$ (see Definition~\ref{def:pieceboolean}).
\item
If $v$ is a selection gate, then it has three parents.
Two of them are either input, or constant, or arithmetic or selection gates
(in any combination) with associated piecewise polynomial functions, say $f$ and $g$.
The third is a Boolean parent gate with associated Boolean formula, say $B$.
Let $\mathcal F$ (respectively, $\mathcal G$) be the partition of $\Real^n$ corresponding to $f$
(respectively, to $g$).
Each partition ${\mathcal F},\ {\mathcal G}$ is represented by a list of Boolean formulae, each representing
an element of the partition.
Consider the partition $\mathcal H$ of $\Real^n$ whose elements are all intersections of the kind $U \cap V$,
where $U \in {\mathcal F}$ and $V=\{ \x \in \Real^n|\> B \}$,  or
$U \in {\mathcal G}$ and $V= \{ \x \in \Real^n|\> \neg B\}$.
Then associate with $v$ the piecewise polynomial $h$, having the partition $\mathcal H$, and coinciding with
$f$ on $\{ \x \in \Real^n|\> B \}$ and with $g$ on $\{ \x \in \Real^n|\> \neg B \}$.
\item
If $v$ is the output gate, then it has one Boolean gate parent with an associated Boolean formula, say $B$.
Associate with $v$ the same Boolean formula $B$.
\end{itemize}

Associate with $\mathcal N$ the Boolean formula associated with the output gate, and denote this formula by
$B({\mathcal N})$.

The following statement is proved in \cite{Gathen}.
Here we give a proof which uses some concepts we will need further on.

\begin{proposition}\label{pr:semialg-to-network}
A set $\Sigma \subset \Real^n$ is semi-algebraic if and only if there is an arithmetic network $\mathcal N$ with $n$
input gates, accepting exactly all inputs in $\Sigma$.
\end{proposition}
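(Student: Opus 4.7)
The plan is to prove both implications by separating them into a semantic-correctness argument and a construction.

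For the implication that every set accepted by a network is semi-algebraic, I would proceed by induction on the depth of the gates of $\mathcal N$, simultaneously proving the following claim: for every gate $v$ whose output type is Real, the piecewise polynomial $f_v$ associated to $v$ in Section~\ref{sec:denotational} correctly computes the real value at $v$ as a function of $(x_1,\ldots,x_n) \in \Real^n$, and the pieces of its partition are semi-algebraic subsets of $\Real^n$; and for every gate $v$ whose output type is Boolean, the formula $B_v$ evaluates to the Boolean value at $v$. The base case (input and constant gates) is immediate. Arithmetic and binary Boolean gates propagate the inductive hypothesis trivially on the common refinement of the partitions of the two inputs. The sign gate case is handled by Definition~\ref{def:pieceboolean}, which provides the explicit equivalent Boolean formula $((f_1\ \sigma\ 0) \land B_1) \lor \cdots \lor ((f_k\ \sigma\ 0) \land B_k)$. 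The selection gate case is handled by the partition $\mathcal H$ described in the construction just before the proposition. Since $B(\mathcal N)$ is a Boolean combination of atomic polynomial sign conditions, the set $\Sigma = \{\x \in \Real^n \mid B(\mathcal N)\}$ is semi-algebraic by definition.

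For the converse, suppose $\Sigma$ is semi-algebraic. Write it in disjunctive normal form as a finite union
\[
\Sigma = \bigcup_{i=1}^m \bigcap_{j=1}^{k_i} \{\x \in \Real^n \mid p_{ij}(\x)\ \sigma_{ij}\ 0\},
\]
with polynomials $p_{ij} \in \Real[x_1,\ldots,x_n]$ and signs $\sigma_{ij} \in \{<,=,>\}$. I would construct $\mathcal N$ in three layers. First, using only input, constant, and arithmetic gates $+,-,\times$, compute each $p_{ij}(x_1,\ldots,x_n)$ by a straight-line program: every polynomial admits such an evaluation, built inductively from monomials. Second, attach a sign gate labelled by $\sigma_{ij}$ to each of these polynomial outputs, producing a Boolean. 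Third, combine these Booleans with $\land$-gates within each conjunction, $\lor$-gates between the conjunctions, and feed the resulting Boolean to the output gate. Correctness of this network—that it accepts exactly $\Sigma$—follows from the operational semantics of the gates in Definition~\ref{def:network}, or formally by applying the first direction to this network.

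The only slightly delicate step is bookkeeping in the forward direction: one must verify that the partition $\mathcal H$ built at a selection gate really is a partition of $\Real^n$ into semi-algebraic pieces that refines both input partitions, and that the equivalent formula of Definition~\ref{def:pieceboolean} is in fact equivalent to the predicate $f\ \sigma\ 0$ at every point of $\Real^n$ (including across boundaries between polynomial pieces). This is a routine case analysis since the pieces are by construction described by Boolean formulae in the polynomial sign conditions that appear upstream in the network, and I do not foresee any substantive obstacle beyond it.
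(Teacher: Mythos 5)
Your proposal is correct and follows essentially the same route as the paper: the converse direction uses the disjunctive normal form, straight-line programs for the polynomials, sign gates, and a Boolean circuit, while the forward direction rests on the fact that $\mathcal N$ accepts exactly the semi-algebraic set defined by $B(\mathcal N)$. The only difference is that you spell out, by induction on gate depth, the correctness of the denotational semantics of Section~2.1, which the paper treats as evident from that construction.
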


\begin{proof}
Let $\Sigma$ be a semi-algebraic set represented by a disjunctive normal form with atomic polynomial equations
and strict inequalities.
One can construct an arithmetic network for $\Sigma$ as follows.
Compute ``in parallel'' each atomic polynomial using straight-line programs (arithmetic circuits).
Attach to the output gate of each program a sign gate labelled by the sign of the corresponding atomic
formula.
Then evaluate the resulting Boolean disjunctive normal form using a Boolean circuit with inputs from
all sign gates.

To prove the converse statement, notice that every arithmetic network $\mathcal N$ accepts exactly all inputs
in the semi-algebraic set in $\Real^n$ defined by the Boolean formula $B(\mathcal N)$.
\end{proof}

\subsection{Elimination of negations}\label{sec:elim}

\begin{definition}
Two arithmetic networks $\mathcal N$ and ${\mathcal N}'$ with $n$ inputs each are {\em equivalent} if
the sets of all accepted inputs for $\mathcal N$ and ${\mathcal N}'$ coincide.
\end{definition}

Clearly two equivalent networks test membership in the same semi-algebraic set.

\begin{lemma}\label{le:negation}
For every arithmetic network $\mathcal N$ there is an equivalent arithmetic network ${\mathcal N}'$ having
no Boolean gates labelled by $\neg$, and such that $d({\mathcal N}') \le d({\mathcal N})$.
\end{lemma}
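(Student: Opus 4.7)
The plan is to push all negations toward the sign gates via De Morgan's laws, so that they are absorbed into the sign gates themselves. For every Boolean-output gate $w$ of $\mathcal N$ (sign gate or Boolean gate) I will build in $\mathcal N'$ two gates $w^+$ and $w^-$ computing the Boolean value of $w$ and its negation, respectively; the negation gates of $\mathcal N$ then disappear, since they are realised simply by swapping the roles of $w^+$ and $w^-$.

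First, copy the real-valued part of $\mathcal N$ (input, constant, arithmetic, and selection gates) verbatim into $\mathcal N'$. I then construct $w^+, w^-$ inductively in the order of depth of the Boolean-output gates of $\mathcal N$. For a sign gate $w$ testing $f\ \sigma\ 0$, set $w^+$ equal to the same sign gate, and set $w^-$ to be an $\lor$-gate whose two inputs are the sign gates on $f$ labelled by the two signs in $\{<,=,>\}\setminus\{\sigma\}$; these auxiliary sign gates reuse the real-valued parent of $w$. For a conjunction $w = w_1\wedge w_2$ set $w^+ := w_1^+\wedge w_2^+$ and $w^- := w_1^-\lor w_2^-$, and dually for disjunction. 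For a negation $w=\neg w_1$ simply put $w^+ := w_1^-$ and $w^- := w_1^+$, adding no new gate. Any selection gate whose Boolean input came from $w$ in $\mathcal N$ will take its Boolean input from $w^+$ in $\mathcal N'$, and the output gate of $\mathcal N'$ reads from $v^+$, where $v$ is the Boolean parent of the output gate of $\mathcal N$.

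Writing $d(\cdot)$ and $d'(\cdot)$ for depth in $\mathcal N$ and $\mathcal N'$ respectively, the main estimate is the joint inductive bound
\[
d'(w^+)\ \le\ d(w), \qquad d'(w^-)\ \le\ d(w)+1,
\]
proved by induction on $d(w)$. The sign-gate base case is immediate: the two auxiliary sign gates in $w^-$ sit at depth $d(w)$ (same real-valued parent as $w$), and the $\lor$ above them at depth $d(w)+1$. The $\wedge/\lor$ step propagates both bounds by $\max$-plus-one, and the $\neg$ step uses $d'(w_1^-)\le d(w_1)+1=d(w)$ together with $d'(w_1^+)\le d(w_1)<d(w)+1$. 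Consequently the output gate of $\mathcal N'$ sits at depth at most $d(v)+1=d(\mathcal N)$, and no selection gate is deepened since its Boolean input $w^+$ remains at depth $\le d(w)$; together these yield $d(\mathcal N')\le d(\mathcal N)$.

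I do not anticipate a genuine obstacle. Correctness of $w^+$ and $w^-$ is a direct structural induction from the De Morgan identities and the tautology $\neg(f\ \sigma\ 0)\equiv\bigvee_{\tau\in\{<,=,>\}\setminus\{\sigma\}}(f\ \tau\ 0)$. The one point demanding care is the sign-gate case: the two auxiliary sign gates in $w^-$ must be introduced at the same depth as $w$ by reusing its real-valued parent, so that the extra $\lor$ above them costs exactly one level; this is precisely what makes $d(\mathcal N')\le d(\mathcal N)$ tight rather than only $d(\mathcal N')\le d(\mathcal N)+O(1)$.
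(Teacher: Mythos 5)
Your proposal is correct and is essentially the paper's own argument: both maintain, for every Boolean-output gate $w$, a companion $\neg$-free gate computing $\neg F_w$ at depth at most $d(w)+1$ (your $w^-$), handle sign gates by the tautology $\neg(f\ \sigma\ 0)\equiv\bigvee_{\tau\neq\sigma}(f\ \tau\ 0)$ with the auxiliary sign gates reattached to the real-valued parent, use De Morgan for $\wedge/\lor$, and eliminate $\neg$-gates by rewiring to the companion; your dual-rail $w^+/w^-$ phrasing is just a cleaner packaging of the paper's induction on depth. The only detail to add is that unused gates (e.g.\ companions $w^-$ that are never read) must be pruned at the end so that the result is a genuine arithmetic network with no hanging vertices, exactly as the paper does.
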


\begin{proof}
The idea of the proof is to push computing of negations to sign gates, where negations of associated Boolean formulae of the kind
$f\ \sigma\ 0$ can be replaced by formulae $f\ \cancel{\sigma}\ 0$, thus avoiding an explicit use of negation gates
(cf. Example~\ref{ex:neg}).

We start with the inductive construction of a directed graph ${\mathcal M}_{\ell +1}$, obtained  by attaching new sign or Boolean
gates of depth at most $\ell +1$ to graph ${\mathcal M}_{\ell}$, constructed by the inductive hypothesis.
Each graph ${\mathcal M}_{\ell}$ differs from an arithmetic network in that it may contain some {\em hanging vertices}
(i.e., gates having no outgoing edges, but different from the output gate).
For such graphs we can define semi-algebraic sets of accepting inputs, associated Boolean formulae, and the equivalence relation
exactly as for the networks.
Each graph ${\mathcal M}_{\ell}$ will be equivalent to ${\mathcal N}$.
We will obtain the network ${\mathcal N}'$ from the last ${\mathcal M}_{\ell}$ in the induction, by removing the irrelevant gates,
including all hanging vertices.

For the base of the induction take ${\mathcal M}_0= \mathcal N$.
The inductive hypothesis assumes the following.
\begin{enumerate}
\item
Let $v$ be a Boolean gate in ${\mathcal M}_\ell$ labelled by $\neg$, having depth $\ell$ and
the associated Boolean formula $F$.
Then $v$ is a hanging vertex, and there exists a sign or a Boolean gate $w$ in ${\mathcal M}_{\ell}$,
having a depth at most $\ell$, with an associated Boolean formula which is equivalent to $F$,
and such that $v$ is not labelled by $\neg$.
\item
Let $v$ be a sign or a Boolean gate in ${\mathcal M}_\ell$, having depth $\ell$ and the associated Boolean formula $F$.
Then there exists a sign or a Boolean gate $u$ in ${\mathcal M}_{\ell}$, having the depth at most $\ell +1$,
with an associated Boolean formula which is equivalent to $\neg F$, and such that $u$ is not labelled by $\neg$.
\item
${\mathcal M}_\ell$ is equivalent to $\mathcal N$.
\end{enumerate}
Observe that since there are no sign or Boolean gates with zero depth, the base of induction, for ${\mathcal M}_0=\mathcal N$,
is trivially true.
We now describe the inductive step by constructing ${\mathcal M}_{\ell +1}$.
Let $v$ be a sign or a Boolean gate in ${\mathcal M}_{\ell}$, with $d(v)=\ell +1$.

Suppose first that $v$ is a sign gate.
Then the Boolean formula $F$, associated with $v$, is of the kind $f\ \sigma\ 0$, where $f$ is a piecewise polynomial
and  $\sigma \in \{ <,=,> \}$.
Let, for definiteness, $\sigma$ be $<$ (cases of $=$ and $>$ are considered analogously).
Since, $v$ is not a Boolean gate labelled by $\neg$, the property (1) for $v$ is trivially satisfied.
Suppose that there are no sign gates $v_1$ and $v_2$ in ${\mathcal M}_{\ell}$, of depths at most $\ell +1$, with which
the formulae $f = 0$ and $f>0$ respectively, are associated.
Let $w$ be the parent of $v$ in ${\mathcal M}_{\ell}$ (which is necessarily either an input or a constant or an arithmetic
or a selection gate).
Note that $d(w)= \ell$.
Attach to $w$, by means of an outgoing edges, two new sign gates, $v_1,\ v_2$ to which associate formulae
$f = 0$ and $f>0$ respectively.
Add to a new Boolean gate $w$, labelled by $\lor$, having $v_1,\ v_2$ as parents.
Clearly, $d(w)=\ell +2$, and the condition (2) is satisfied.
Since $w$ is a hanging vertex, the resulting graph is equivalent to ${\mathcal M}_{\ell}$.

If $v$ is a Boolean gate labelled by $\lor$ and $F=A \lor B$, where $A$ and $B$ are Boolean sub-formulae,
then $A$ and $B$ are associated with sign or Boolean gates, say $a$ and $b$,  with depth less than $\ell +1$.
The condition (1) for $v$ is trivially satisfied.

If there is a sign or a Boolean gate, not labelled by $\neg$, with depth at most $\ell +2$, with associated
formula equivalent to $\neg F$, then condition (2) is satisfied.
Suppose otherwise.
By the inductive hypothesis, there are gates $a'$ and $b'$ of depth at most $\ell +1$,
such that the Boolean formula $A'$ (respectively, $B'$) associated with $a'$ (respectively, $b'$)
is equivalent to $\neg A$ (respectively, $\neg B$).
Create a new Boolean gate $v'$, labelled by $\land$, making it a common child of $a'$ and $b'$.
Associate Boolean formula $F'= A' \land B'$ with $v'$.
Observe that $d(v') \le \ell +2$.
Thus, the condition (2) for $v$ is satisfied.
Since $v'$ is a hanging vertex, the resulting graph is equivalent to ${\mathcal M}_{\ell}$.

The similar construction is applied in the case when $F=A \land B$.

Let $v$ be a Boolean gate labelled by $\neg$ with the associated formula $F= \neg A$.
Then formula $A$ is associated with a sign or a Boolean gate $a$ having the depth at most $\ell$.
By the inductive hypothesis, there is a sign or a Boolean gate $a'$, not labelled by $\neg$,
with the depth at most $\ell +1$ and the associated Boolean formula $A'$ such that $A'$ is equivalent to $\neg A$.
Detach the outgoing edges from $v$ and attach them as outgoing edges to $a'$, replacing in Boolean formulae,
associated with the descendants, the subformula $F$ by $A'$.
Clearly, conditions (1) and (2) for $v$ are satisfied, and the resulting graph is equivalent to ${\mathcal M}_{\ell}$.

Performing the above construction for all sign or  Boolean gates $v$ in ${\mathcal M}_{\ell}$, with $d(v)=\ell +1$,
we obtain a graph ${\mathcal M}_{\ell +1}$, which is equivalent to ${\mathcal M}_{\ell}$, hence by the inductive hypothesis
the condition (3) is satisfied.
Observe that $d({\mathcal M}_{\ell +1}) \le d({\mathcal M}_\ell)$ for each $\ell$.

This completes the inductive construction.
Let $\mathcal M := {\mathcal M}_{d(\mathcal N)}$.
Observe that $\mathcal M$ may not be an arithmetic network graph because it may contain hanging vertices.
A hanging vertex may be one of two types: the ones labelled by $\neg$ from the original network $\mathcal N$,
and new gates created by the construction which remaind unused.

Let $w$ be the {\em last} Boolean or sign gate of the $\mathcal M$, i.e., the (unique) gate
whose unique child is the output gate.
Such gate exists since the last Boolean or sign gate $v$ exists in $\mathcal N$, and, according to the construction,
it either remains the last in $\mathcal M$, or $v$ is a Boolean gate labelled by $\neg$, and its (unique)
outgoing edge was re-attached to another gate, $a'$.
Remove from $\mathcal M$ all gates that are not ancestors of $w$, and denote the result by ${\mathcal N}'$.
This removes, in particular, all hanging vertices from $\mathcal M$, hence ${\mathcal N}'$ is an arithmetic network.
Let $F$ (respectively, $F'$) be the Boolean formula associated with $v$ in $\mathcal N$ (respectively,
$w$ in ${\mathcal N}'$).
By the construction, $F$ and $F'$ are equivalent Boolean formulae, thus they define the same semialgebraic set.
Therefore, $\mathcal N$ and ${\mathcal N}'$ are equivalent.

By the construction, $d({\mathcal M}_{\ell +1}) \le d({\mathcal M}_\ell)$ for each $\ell$, hence
$d({\mathcal M}) \le d({\mathcal N})$.
It follows that $d({\mathcal N}') \le d({\mathcal N})$, since obviously $d({\mathcal N}') \le d({\mathcal M})$.
\end{proof}

\begin{example}\label{ex:neg}
Consider an application of Lemma~\ref{le:negation} to a concrete network shown on Fig.~1.
The directed graph $\mathcal M$ is drawn on Fig.~2.
Note that $\mathcal M$ is {\em not} a network (it has hanging Boolean gates).
Fig.~3 shows the resulting network ${\mathcal N}'$.
\end{example}

\begin{figure}[hbt]
       \centerline{
          \scalebox{0.3}{
             \includegraphics{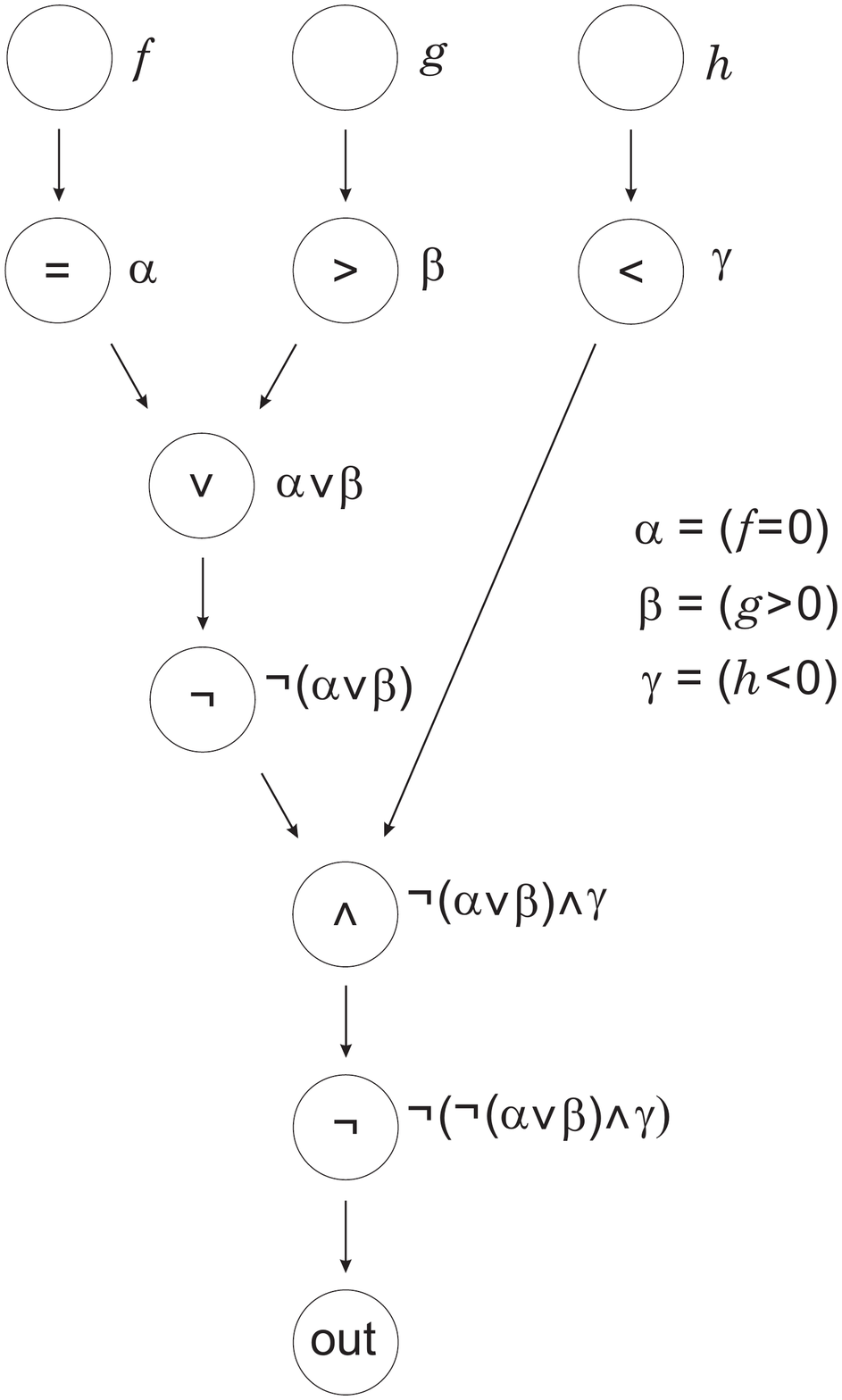}
             }
           }
\caption{ }
\label{fig:network1}
\end{figure}

\begin{figure}[hbt]
       \centerline{
          \scalebox{0.4}{
             \includegraphics{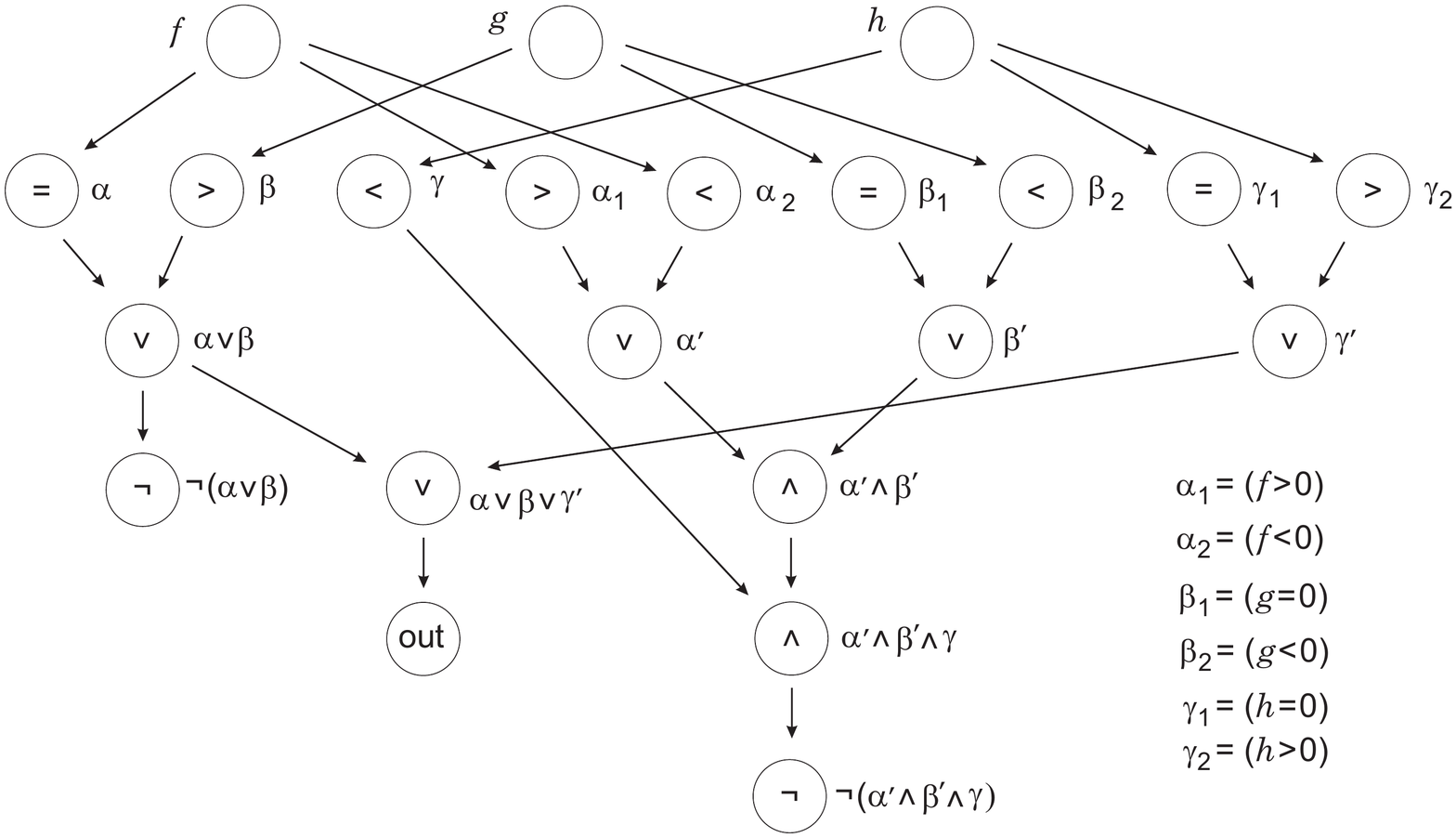}
             }
           }
\caption{ }
\label{fig:network1}
\end{figure}

\begin{figure}[hbt]
       \centerline{
          \scalebox{0.3}{
             \includegraphics{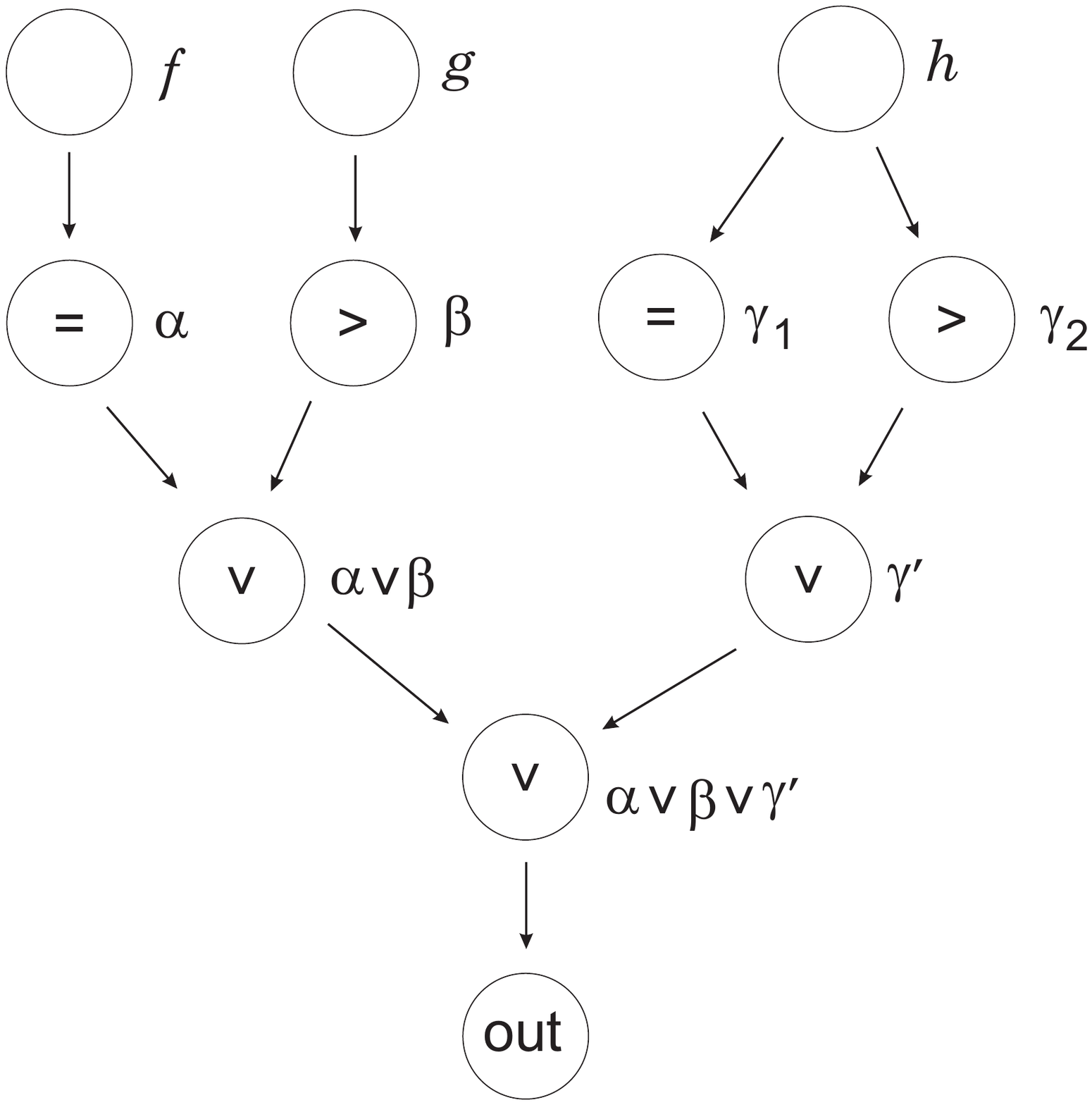}
             }
           }
\caption{ }
\label{fig:network1}
\end{figure}

\subsection{Modification of selection gates}\label{sec:mod}

Let $\mathcal N$ be an arithmetic network testing membership to a semi-algebraic set $\Sigma$.
By Lemma~\ref{le:negation}, we can assume that it has no gates labelled by $\neg$.
In Section~\ref{sec:lower} we will need to modify $\mathcal N$ so that the modified network tests membership in
a compact semi-algebraic set which is homotopy equivalent to $\Sigma$.
In this process, the modification of a Boolean formula, associated with a gate, and the modification
of its negation will not become negations of one another.

For instance (see details in Example~\ref{ex:boolean}), $f^2=0$ is equivalent to $\neg (f^2>0)$.
After compactification, $f^2=0$ will turn into $f^2 \le \eps$ while $f^2>0$ will turn into $f^2 \ge \delta$,
where $\eps$ and $\delta$ are some small positive real numbers with $\eps < \delta$.
At the same time, a selection gate having $f^2 \le \eps$ as its Boolean input, will automatically produce the
implicit complement condition $f^2 > \eps$, which is different from the required $f^2 \ge \delta$.

Thus we need to separate, for every selection gate, the Boolean formula, associated with its Boolean parent, from
the implicit negation of this formula.
To prepare this operation we now describe a further modification of $\mathcal N$ which results in another equivalent network,
${\mathcal N}''$, in which each selection gate is coupled with another selection gate, having the contrary Boolean parent.

As in the proof of Lemma~\ref{le:negation}, we can modify $\mathcal N$ so that for each selection gate with
a sign or a Boolean parent which has an associated Boolean formula $A$, the resulting directed graph simultaneously
has a sign or a Boolean gate with an associated Boolean formula $A'$ which is equivalent to $\neg A$.

More precisely, let $\mathcal M$ be the directed graph defined in the proof of Lemma~\ref{le:negation}.
For every selection gate in $\mathcal M$ do the following.
Replace in $\mathcal M$ the subgraph of the type shown on Fig.~4 by another subgraph, shown on Fig.~5.
(Note that for this we may need to introduce an additional {\em constant gate} labelled by $0$.)
It is easy to observe that the arithmetic $(+)$-gate in Fig.~5 outputs the same numerical value as the selection
gate of $\mathcal N$ in Fig.~4.

\begin{figure}[hbt]
       \centerline{
          \scalebox{0.3}{
             \includegraphics{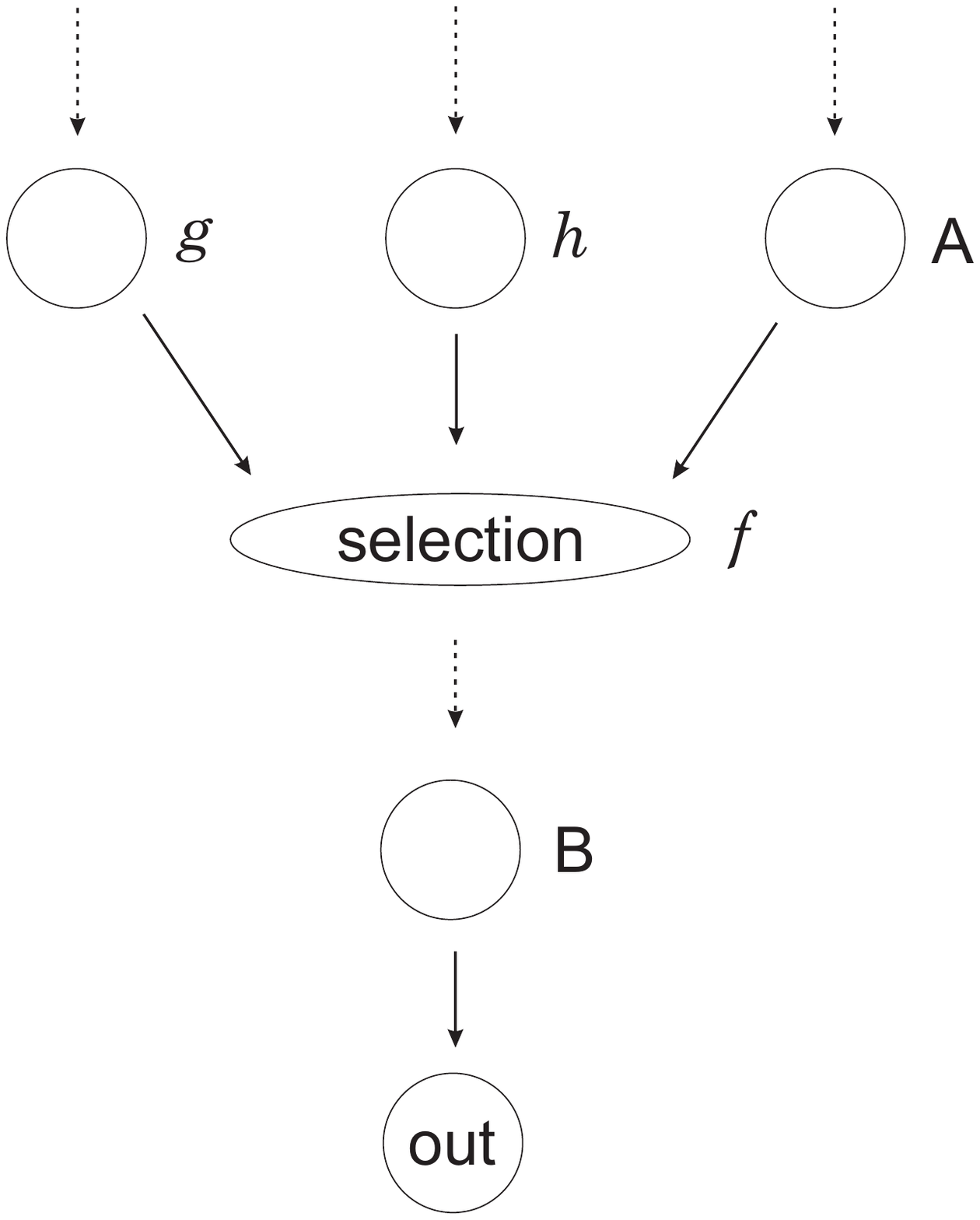}
             }
           }
\caption{ }
\label{fig:network1}
\end{figure}

The output gate in Fig.~4 has associated Boolean formula $B$.
Since $A'$ is equivalent to $\neg A$, the Boolean formula $A \lor A'$ is identically true,
hence the truth value of the Boolean formula $B \land (A \lor A')$, associated with the output gate in
Fig.~5, is always the same as the truth value of $B$.

\begin{figure}[hbt]
       \centerline{
          \scalebox{0.3}{
             \includegraphics{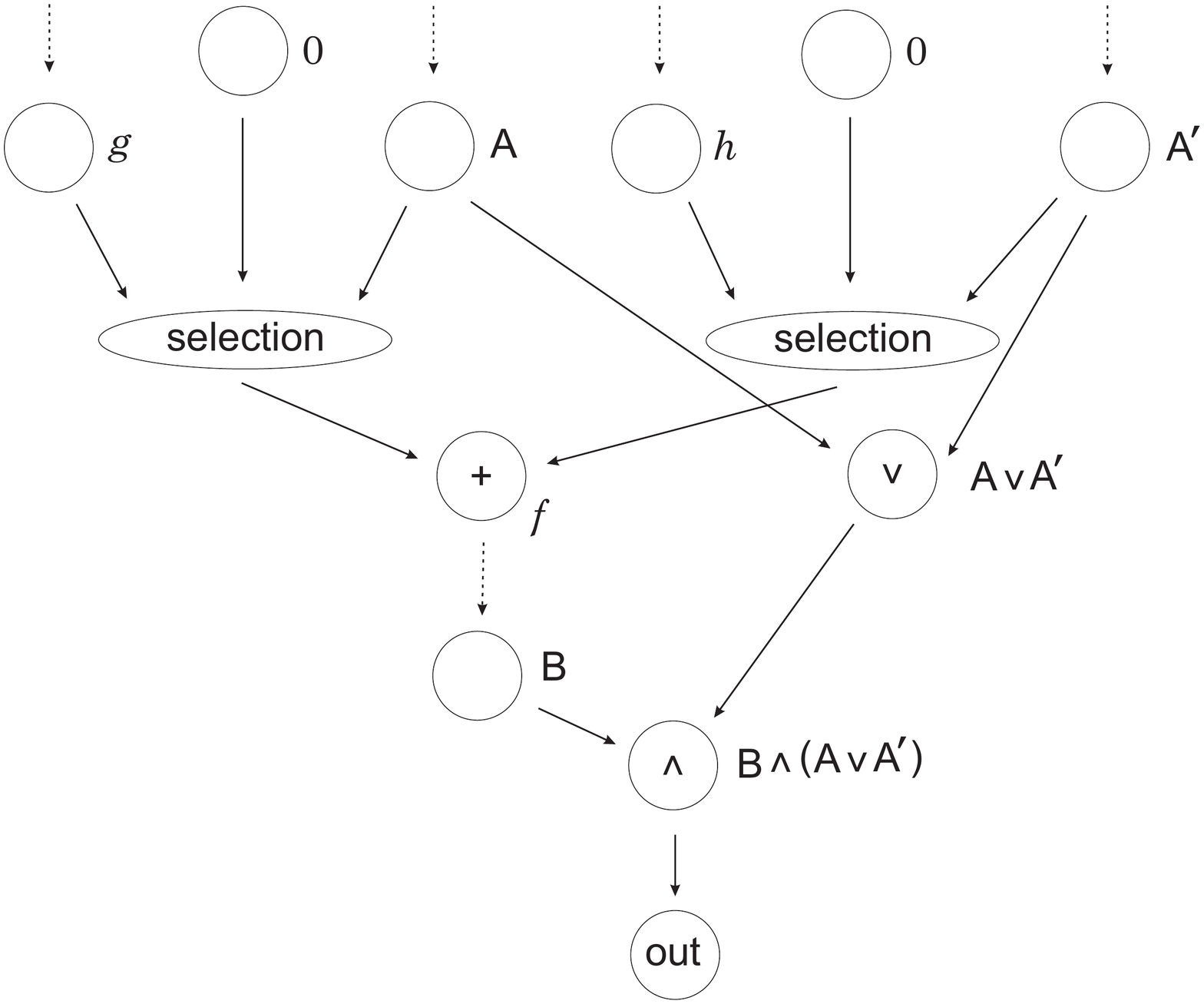}
             }
           }
\caption{ }
\label{fig:network1}
\end{figure}

Once the replacement is done for every selection gate in $\mathcal M$, we take the conjunction
of Boolean formulae associated with of all output gates in Fig.~5, using a dichotomy (binary tree).
Remove from the resulting graph all gates that are not ancestors of the last gate.
Denote the obtained network by ${\mathcal N}''$.

We've just proved the first half of the following lemma.

\begin{lemma}\label{le:selproc}
Networks ${\mathcal N}''$ and $\mathcal N$ are equivalent, and $d({\mathcal N}'')=O(d(\mathcal N))$.
\end{lemma}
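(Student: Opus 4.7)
The first half of the lemma, equivalence of $\mathcal N$ and ${\mathcal N}''$, has already been verified in the paragraphs preceding the statement: the arithmetic $(+)$-gate in Fig.~5 outputs the same real number as the selection gate in Fig.~4, and the Boolean formula $B \land (A \lor A')$ is equivalent to $B$ because $A'$ is equivalent to $\neg A$. The final conjunction via a binary tree just forms the conjunction of copies of $B$, so the accepting set is preserved. Hence only the depth bound needs proof.

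My plan is to track the depth cost of the three modifications in turn. First, passing from $\mathcal N$ to $\mathcal M$ is free: the proof of Lemma~\ref{le:negation} showed $d({\mathcal M}_{\ell+1}) \le d({\mathcal M}_{\ell})$ at every stage, so $d(\mathcal M) \le d(\mathcal N)$. Moreover, the inductive claim (2) in that proof guarantees that for every sign or Boolean gate with associated formula $A$ of depth $\ell$, there is a gate of depth at most $\ell+1$ with associated formula $A'$ equivalent to $\neg A$ and not labelled by $\neg$. This is exactly the raw material needed to carry out the replacement from Fig.~4 to Fig.~5 locally.

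Next, I would argue that each local replacement at a selection gate adds only an $O(1)$ additive term to the depth at that position. Indeed, the old selection gate at depth $d$ is replaced by two selection gates (one with Boolean input $A$, one with $A'$) feeding into an arithmetic $+$-gate, so the depth of the value output by $+$ is at most $\max(d(A),d(A')) + 2 \le d + 3$ thanks to the $+1$ from the $A'$ construction above. Consequently, along any directed path in $\mathcal M$ through $k$ selection gates, the corresponding path in the intermediate graph is longer by at most $3k$; since $k$ is bounded by the number of gates on such a path, which is at most $d(\mathcal M) \le d(\mathcal N)$, the intermediate graph has depth $O(d(\mathcal N))$.

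Finally, I would handle the binary tree of conjunctions. The number of selection gates after the previous steps is bounded by the total number of gates, which by the standard estimate $s \le 3^{d+1}$ satisfies $s(\mathcal M) = O(3^{d(\mathcal N)})$. A balanced binary conjunction tree on this many Boolean inputs therefore has depth $O(\log s(\mathcal M)) = O(d(\mathcal N))$. Summing the three contributions gives $d({\mathcal N}'') = O(d(\mathcal N))$. The one point that requires care, and which I expect to be the main obstacle in a fully rigorous writeup, is verifying that the $O(1)$ depth overhead at each selection gate really does add rather than compound: this is true because the replacement is purely \emph{local} at each selection gate, so depth growth is accumulated only along paths that pass through many selection gates, and such paths already have length at least the number of selection gates they contain.
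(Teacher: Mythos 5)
Your proof is correct and follows essentially the same route as the paper: equivalence is inherited from the construction, and the dominant depth cost is the balanced conjunction tree over the selection-gate outputs, bounded by $\log\bigl(3^{d(\mathcal N)+1}\bigr)=O(d(\mathcal N))$. Your middle step is in fact slightly more careful than the paper's, which asserts that the conjunction tree is the \emph{only} source of depth growth; you correctly note that each Fig.~4\,$\to$\,Fig.~5 replacement adds an $O(1)$ overhead (using the $+1$ depth guarantee for the negated formula $A'$ from Lemma~\ref{le:negation}) and that these overheads accumulate only additively along paths, which still yields $O(d(\mathcal N))$.
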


\begin{proof}
The equivalence of ${\mathcal N}''$ and $\mathcal N$ has been proved along with the construction of ${\mathcal N}''$.

The only reason why the depth of ${\mathcal N}''$ may increase from $d(\mathcal N)$, is the necessity to take a conjunction
of Boolean formulae associated with all output gates in subgraphs of the type shown on Fig.~5 by a dichotomy,
using the additional depth not greater than the logarithm of the number of selection gates in $\mathcal N$.
Since the number of selection gates does not exceed $3^{d({\mathcal N})+1}$ (the total number of gates),
the depth will increase by $O(d(\mathcal N))$, thus the total depth of ${\mathcal N}''$ becomes $O(d(\mathcal N))$.
\end{proof}

\section{Topological tools}

This section coincides (up to minor details) with Section~2 in \cite{GV14}, and is reproduced for reader's
convenience.
Here we formulate some results from \cite{GV05, GV09, GVZ} which are used further in this paper.

In what follows, for a topological space $X$, let ${\rm b}_m(X):={\rm rank}\ H_m(X)$ be its $m$-th Betti number
with respect to the singular homology group $H_m(X)$ with coefficients in some fixed Abelian group.
By ${\rm b}(X)$ we denote the {\em total} Betti number of $X$, i.e., the sum
$\sum_{i \ge 0} {\rm b}_i (X)$.

\subsection{Approximation by monotone families}\label{sub:approx}
\begin{definition}\label{def:S_delta}
Let $G$ be a compact semialgebraic set.
Consider a semialgebraic family $\{ S_\delta \}_{\delta >0}$ of
compact subsets of $G$, such that for all $\delta', \delta \in (0,1)$,
if $\delta' > \delta$, then $S_{\delta'} \subset S_{\delta}$.
Denote $S := \bigcup_{\delta >0} S_{\delta}$.

For each $\delta >0$, let $\{ S_{\delta, \eps} \}$ be a semialgebraic family
of compact subsets of $G$ such that:
\begin{itemize}
\item[(i)]
for all $\eps, \eps' \in (0,1)$, if $\eps' > \eps$, then
$S_{ \delta, \eps} \subset S_{ \delta, \eps'}$;
\item[(ii)]
$S_{\delta}= \bigcap_{\eps >0} S_{\delta, \eps}$;
\item[(iii)]
for all $\delta' >0$ sufficiently smaller than
$\delta$, and for all $\eps' >0$, there exists an open in $G$ set $U \subset G$
such that $S_{\delta} \subset U \subset S_{\delta' , \eps'}$.
\end{itemize}
We say that $S$ is {\em represented} by the families $\{ S_\delta \}$ and
$\{ S_{\delta, \eps} \}$ in $G$.
\end{definition}

Consider the following two particular cases.
\medskip

\noindent {\bf Case 1.}\quad Let a semialgebraic set $S$ be given
by a Boolean formula {\em with no negations}, and with atomic subformulae of the kind $f=0$ or $f>0$.
Let $\delta$ and $\eps$ be some positive constants.

Suppose first that $S$ is bounded in $\Real^n$, take as $G$ a closed ball of a sufficiently large
radius centered at 0.
The set $S_{\delta}$ is the result of the replacement of all inequalities $h>0$ and $h<0$ by $h \ge \delta$
and $h \le -\delta$ respectively.
The set $S_{\delta, \eps}$ is obtained by replacing
all expressions $h>0$, $h<0$ and $h=0$ by $h \ge \delta$, $h \le -\delta$ and
$h^2 - \eps \le 0$, respectively.
By Lemma~1.2 in \cite{GV09}, the set $S$, is {\em represented} by families $\{ S_{\delta} \}$ and
$\{ S_{\delta, \eps} \}$ in $G$.

Now suppose that $S$ is not necessarily bounded.
In this case as $G$ take the semialgebraic one-point (Alexandrov) compactification of $\Real^n$.
Define sets $S_\delta$ and $S_{\delta, \eps}$ as in the bounded case, replacing equations and
inequalities, and then taking the conjunction of the resulting formula with $|\x|^2 \le 1/\delta$.
Again, $S$ is represented by $\{ S_{\delta} \}$ and $\{ S_{\delta, \eps} \}$ in $G$.
\medskip

\noindent {\bf Case 2.}\quad Let $\rho:\> \Real^{n+r} \to \Real^n$ be the projection map, and
$S \subset \Real^{n+r}$ be a semialgebraic set, given as a disjoint union of {\em basic} semialgebraic sets.
The set $S$ is represented by families $\{ S_\delta \}$,
$\{ S_{\delta , \eps} \}$ in the compactification of $\Real^{n+r}$ as
described in {\bf Case~1}.
One can verify \cite{GV09}, that the projection $\rho (S)$ is represented by families
$\{ \rho(S_\delta) \}$, $\{ \rho(S_{\delta , \eps}) \}$ in the Alexandrov compactification of $\Real^n$.
\medskip

Returning to the general case, suppose that a semialgebraic set $S$ is {\em represented} by families
$\{ S_{\delta} \}$ and $\{ S_{\delta, \eps} \}$ in $G$.

For a sequence $\eps_0 , \delta_0 ,\eps_1 , \delta_1 , \ldots ,\eps_m , \delta_m$,
where $m \ge 0$, introduce the compact set
$$T_m(S):=S_{\delta_0,\eps_0}\cup S_{\delta_1,\eps_1} \cup \cdots \cup S_{\delta_m,\eps_m}.$$

Observe that in {\bf Case~2}, we have the equality
\begin{equation}\label{eq:rho}
T_m(\rho(S))=\rho(T_m(S)).
\end{equation}

In what follows, for two real numbers $a,\ b$ we write $a \ll b$ to
``mean $a$ is sufficiently smaller than $b$'' (see formal Definition~1.7 in \cite{GV09}).

\begin{proposition}[\cite{GV09}, Theorem~1.5]\label{pr:main}
For any $m \ge 0$, and
$$0<\eps_0 \ll\delta_0\ll\eps_1 \ll \delta_1 \ll \cdots \ll\eps_m \ll \delta_m \ll 1$$
we have
\begin{itemize}
\item[(i)]
for every $1 \le k \le m$, there is an epimorphism $\varphi_k:\> H_k(T_m(S)) \to H_k(S)$,
in particular, ${\rm b}_k(S) \le {\rm b}_k(T_m(S))$;
\item[(ii)]
in {\bf Case~1}, for every $1 \le k \le m-1$, the epimorphism $\varphi_k$ is an
isomorphism, in particular, ${\rm b}_k(S) = {\rm b}_k(T_m(S))$.
Moreover, if $m \ge \dim (S)$, then $T_m(S)$ is homotopy equivalent to $S$.
\end{itemize}
\end{proposition}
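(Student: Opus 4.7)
The plan is to compare $H_\ast(T_m(S))$ with $H_\ast(S)$ by inserting $S_{\delta_0}$ as an intermediate object that sits naturally inside both. Since $S = \bigcup_\delta S_\delta$ is a filtered union of compact semialgebraic subsets and singular homology commutes with such filtered colimits, for any sufficiently small $\delta_0$ the inclusion $S_{\delta_0} \hookrightarrow S$ is surjective on $H_k$ in all degrees (and, under regularity hypotheses supplied in Case~1, an isomorphism in appropriate degrees). On the other hand $S_{\delta_0} \subset S_{\delta_0,\eps_0} \subset T_m(S)$. The map $\varphi_k$ would be defined by first proving that the inclusion $S_{\delta_0} \hookrightarrow T_m(S)$ induces a surjection (and, in Case~1, an isomorphism) on $H_k$ in the advertised range, and then composing with $H_k(S_{\delta_0}) \to H_k(S)$.

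The key technical step is to analyze this inclusion inductively on $m$ via the cover $T_m(S) = T_{m-1}(S) \cup S_{\delta_m,\eps_m}$ and its Mayer--Vietoris sequence. Condition (iii) applied to $\delta' = \delta_{m-1} \ll \delta = \delta_m$ with $\eps' = \eps_{m-1}$ produces an open set $U_m \subset G$ with $S_{\delta_m} \subset U_m \subset S_{\delta_{m-1},\eps_{m-1}}$; since the family $\{S_{\delta_m,\eps}\}$ decreases to $S_{\delta_m}$ as $\eps \to 0$ and $S_{\delta_m}$ is compact, the prescribed ordering $\eps_m \ll \delta_m$ forces $S_{\delta_m,\eps_m} \subset U_m \subset T_{m-1}(S)$ once the ``$\ll$'' is taken strictly enough. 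Consequently the intersection $T_{m-1}(S) \cap S_{\delta_m,\eps_m}$ is sandwiched between $S_{\delta_m,\eps_m}$ and an open neighborhood of $S_{\delta_m}$ in $T_{m-1}(S)$, and the Mayer--Vietoris boundary propagates the surjectivity/isomorphism one degree at a time. The asymmetric loss of a single degree between parts (i) and (ii) (the bound $k \le m$ versus $k \le m - 1$) emerges precisely as the price of one Mayer--Vietoris step where surjectivity cannot a priori be promoted to injectivity.

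For part (ii) in Case~1, the explicit form of $S_{\delta,\eps}$ — a compact set obtained by replacing each equation $h = 0$ with the tube $h^2 \le \eps$ — equips every $S_{\delta_i,\eps_i}$ with a canonical deformation retract onto $S_{\delta_i}$ (collapsing the tube onto its core), and these retracts can be made compatible with the nested open neighborhoods $U_i$, upgrading the Mayer--Vietoris surjections to isomorphisms in degrees $1 \le k \le m - 1$. The moreover clause when $m \ge \dim S$ follows by combining the homology isomorphism with the semialgebraic Whitehead theorem: both $T_m(S)$ and $S$ admit compatible semialgebraic triangulations, so a homology equivalence between semialgebraic sets of matching dimensions lifts to a genuine homotopy equivalence, realized concretely by simultaneously collapsing all tubes $h^2 \le \eps_i$ onto their cores and letting $\delta_0 \to 0$. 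The main obstacle will be the inductive Mayer--Vietoris step, since it requires precise geometric control of how the added piece $S_{\delta_m,\eps_m}$ meets the previous union; this control is exactly what the interleaved ordering $\eps_0 \ll \delta_0 \ll \eps_1 \ll \delta_1 \ll \cdots$ is designed to produce through condition (iii), and any weakening of the interleaving would fail to supply the neighborhood $U_m$ required to identify the intersection with a retractable neighborhood of $S_{\delta_m}$.
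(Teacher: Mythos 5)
This proposition is quoted from \cite{GV09} (Theorem~1.5 there); the present paper supplies no proof of its own, so your attempt has to stand on its own feet, and unfortunately its very first step is false. You claim that, since $S=\bigcup_{\delta>0}S_\delta$ is a filtered union of compact sets, the inclusion $S_{\delta_0}\hookrightarrow S$ induces a surjection on $H_k$ for $\delta_0$ small. Singular homology commutes with a filtered union only when every compact subset of the union is contained in some stage, and that fails for monotone families of compact semialgebraic sets. Concretely, let $G=Q=\partial([0,1]^2)$, let $S_\delta=(Q\setminus B(0,\delta))\cup\{0\}$, and let $S_{\delta,\eps}$ be the closed $\eps$-neighbourhood of $S_\delta$ in $Q$; all conditions of Definition~\ref{def:S_delta} hold, and $S=\bigcup_\delta S_\delta=Q$ is a circle, yet every $S_{\delta}$ is an arc plus an isolated point, so $H_1(S_{\delta_0})=0$ cannot surject onto $H_1(S)=\Z$ (and $H_0(S_{\delta_0})\to H_0(S)$ is not injective). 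In this example $T_1(S)=S_{\delta_0,\eps_0}\cup S_{\delta_1,\eps_1}$ is already all of $Q$, so the epimorphism $\varphi_1$ does exist --- but it visibly cannot factor through $H_1(S_{\delta_0})$, which is exactly the factorization your whole argument is built on. This is the raison d'\^etre of the telescope $T_m(S)$: no single $S_\delta$ or $S_{\delta,\eps}$ captures $H_k(S)$, and one needs on the order of $k$ interleaved pieces to recover $H_k$; any proof must produce $\varphi_k$ by a mechanism that uses all the pieces at once, as the argument in \cite{GV09} does.

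Even setting that aside, two further steps would not survive. First, the containment $S_{\delta_m,\eps_m}\subset U_m\subset T_{m-1}(S)$ that your Mayer--Vietoris step needs is not delivered by the stated order of quantifiers: in the chain $\eps_{m-1}\ll\delta_{m-1}\ll\eps_m\ll\delta_m$ the parameter $\eps_m$ is fixed \emph{before} $\delta_{m-1}$ and $\eps_{m-1}$, hence cannot be taken small relative to the open set $U$ furnished by condition (iii), which depends on them; condition (iii) controls only the core, giving $S_{\delta_m}\subset U\subset S_{\delta_{m-1},\eps_{m-1}}$, and in general (already for a single equation $h=0$ in Case~1, where $S_{\delta,\eps}=\{h^2\le\eps\}$) the later pieces are \emph{larger} than the earlier ones, not contained in them. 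Second, in the ``moreover'' clause, a homology isomorphism between triangulable spaces does not upgrade to a homotopy equivalence: Whitehead's theorem needs an actual map inducing isomorphisms on homotopy groups (or on homology of universal covers), and nothing here controls $\pi_1$; the homotopy-equivalence statement is proved in \cite{GV09} by a separate construction, not by citing a homology equivalence.
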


\subsection{Betti numbers of projections}

\begin{definition}
For two maps $f_1:\> X_1 \to Y$ and $f_2:\> X_2 \to Y$ , the {\em fibered product} of $X_1$ and $X_2$
is defined as
$$X_1 \times_Y X_2:=\{ (\x_1,\x_2)\in X_1 \times X_2|\> f_1(\x_1)=f_2(\x_2)\}.$$
\end{definition}

\begin{proposition}[\cite{GVZ}, Theorem~1]\label{pr:map}
Let $f:\> X \to Y$ be a closed surjective semialgebraic map (in particular, $f$ can be the projection
map to a subspace, with a compact $X$).
Then
$${\rm b}_m(Y) \le \sum_{p+q=m} {\rm b}_q(W_p),$$
where
$$W_p:= \underbrace{X \times_Y \cdots \times_Y X}_\text{(p+1) {\rm times}}.$$
\end{proposition}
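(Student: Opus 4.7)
The plan is to construct a \emph{descent spectral sequence} (the Cech-nerve spectral sequence of the surjection $f$) and to read the claimed inequality off its $E_1$-page. Once such a spectral sequence converging to $H_*(Y)$ is in hand, the estimate is formal: on the diagonal $p+q=m$ every $E^\infty_{p,q}$ is a subquotient of $E^1_{p,q}$, so
\[
{\rm b}_m(Y) \;=\; \sum_{p+q=m}\dim E^\infty_{p,q} \;\le\; \sum_{p+q=m}\dim E^1_{p,q}.
\]

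First I would form the augmented simplicial semialgebraic space $X_\bullet \to Y$ whose space of $p$-simplices is $X_p := W_p$, with face maps $\partial_i\colon W_p \to W_{p-1}$ omitting the $i$-th factor and the evident degeneracies. Taking the semialgebraic geometric realisation yields a space $|X_\bullet|$ together with a canonical augmentation $\pi\colon |X_\bullet| \to Y$. The skeletal filtration of $|X_\bullet|$ then produces a first-quadrant spectral sequence with $E^1_{p,q} = H_q(W_p)$ abutting to $H_{p+q}(|X_\bullet|)$; provided $\pi$ is a weak homotopy equivalence, this spectral sequence converges to $H_{p+q}(Y)$, which combined with the subquotient estimate above gives the advertised bound.

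The main obstacle, and the step in which the hypothesis that $f$ is closed and surjective really enters, is to show that $\pi$ is a weak equivalence. For this I would apply Hardt's semialgebraic triviality theorem to $f$: it produces a finite semialgebraic partition $\{Y_\alpha\}$ of $Y$ over each piece of which $f$ is a trivial fibration $f^{-1}(Y_\alpha) \cong Y_\alpha \times F_\alpha$ with nonempty fibre $F_\alpha$. Over $Y_\alpha$ the fibered product $W_p$ becomes $Y_\alpha \times F_\alpha^{p+1}$, and the restricted simplicial space is the product of $Y_\alpha$ with the Cech nerve of $F_\alpha \to \mathrm{pt}$, whose realisation is contractible by the standard extra-degeneracy null-homotopy. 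Hence $\pi$ restricts to a weak equivalence over each stratum, and closedness of $f$ controls how the strata are glued: an induction on the number of strata via a semialgebraic Mayer--Vietoris argument applied to an open cover refining the stratification then promotes the fibrewise weak equivalence to the global weak equivalence $\pi\colon |X_\bullet| \xrightarrow{\sim} Y$, completing the proof. The hypothesis in the parenthetical — $f$ a projection to a subspace with $X$ compact — is a convenient concrete instance, since projections restricted to compact sets are automatically closed.
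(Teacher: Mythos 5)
The proposition is quoted from \cite{GVZ} (Theorem~1); the present paper contains no proof of it, so the relevant comparison is with the argument in that reference. Your overall strategy --- the \v{C}ech-nerve (descent) spectral sequence with $E^1_{p,q}=H_q(W_p)$ abutting to $H_{p+q}(Y)$, followed by the formal observation that each $E^\infty_{p,q}$ is a subquotient of $E^1_{p,q}$ --- is essentially the strategy of \cite{GVZ}, where it is phrased via the $(p+1)$-fold fibered join $X\ast_Y\cdots\ast_Y X$, whose skeletal filtration produces the same $E^1$ page. So the architecture of your proof is the intended one.

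The gap is in your justification of the key step, that the augmentation $\pi\colon |X_\bullet|\to Y$ induces an isomorphism on homology. Hardt's semialgebraic trivialization produces \emph{locally closed} strata $Y_\alpha$, not open ones, so ``an open cover refining the stratification'' does not exist in general, and a Mayer--Vietoris induction over such a partition does not go through as stated: a map which is a weak equivalence over each piece of a locally closed partition need not be one globally without additional cofibration or properness control. Moreover, this is exactly the point where the closedness hypothesis must do real work --- a non-closed surjection can be a trivial fibration with contractible fibre over each Hardt stratum and still fail to be a homology equivalence, so any argument that only uses fibrewise contractibility of the nerve plus stratum-by-stratum gluing proves too much. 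The standard repair (and the mechanism in \cite{GVZ}) is the Vietoris--Begle theorem: the induced map from the $(m+1)$-fold fibered join (or the truncated realization) to $Y$ is again closed and surjective, its fibres are $(m+1)$-fold joins of nonempty sets and hence $m$-connected, and Vietoris--Begle then yields the isomorphism on (co)homology in degrees up to $m$, which is all the spectral-sequence comparison requires. You should replace your gluing step with this argument; the rest of your proof stands.
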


\section{Lower bounds}\label{sec:lower}

\begin{theorem}\label{th:general}
Let $\mathcal N$ be an arithmetic network testing membership in a semi-algebraic set $\Sigma \subset \Real^n$.
Then
$$d({\mathcal N}) \ge c_1 \sqrt{ \frac{\log ({\rm b}(\Sigma))}{n}} -c_2 \log n,$$
where $c_1,\ c_2$ are some positive constants.
\end{theorem}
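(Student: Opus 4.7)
The strategy is to reduce Theorem~\ref{th:general} to Proposition~\ref{pr:Mon2} by replacing the semi-algebraic set $\Sigma$ with a \emph{compact} semi-algebraic set $T\subset\Real^n$ that is homotopy equivalent to $\Sigma$ and is tested by an arithmetic network of depth $O(d(\mathcal{N})+\log n)$. Since $T$ is compact, its singular and Borel--Moore Betti numbers coincide, so the MMP bound automatically produces the singular Betti numbers of $\Sigma$ on the right-hand side.

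First I would apply Lemma~\ref{le:negation} to eliminate $\neg$-gates, and then the construction of Section~\ref{sec:mod} (Lemma~\ref{le:selproc}) to attach, at every selection gate, a second sign/Boolean gate computing an equivalent of the negation of its Boolean input. This yields an equivalent network $\mathcal{N}''$ with $d(\mathcal{N}'')=O(d(\mathcal{N}))$, whose associated formula $B(\mathcal{N}'')$ is a negation-free Boolean combination of atoms $f\,\sigma\,0$ with $\sigma\in\{<,=,>\}$. Case~1 of Section~\ref{sub:approx} therefore applies to $\Sigma$; taking $m=\dim\Sigma\le n$ and a sufficiently fast-decaying sequence $0<\eps_0\ll\delta_0\ll\cdots\ll\delta_m\ll 1$, Proposition~\ref{pr:main}(ii) yields a compact semi-algebraic set $T:=T_m(\Sigma)\subset\Real^n$ that is homotopy equivalent to $\Sigma$, so ${\rm b}(T)={\rm b}(\Sigma)$.

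The central technical step is the construction of an arithmetic network $\mathcal{N}^{*}$ of depth $O(d(\mathcal{N})+\log n)$ accepting exactly $T$. For each $i\in\{0,\ldots,m\}$, a sub-network tests membership in $\Sigma_{\delta_i,\eps_i}$ by reusing the arithmetic/selection skeleton of $\mathcal{N}''$ verbatim, replacing every sign gate by a constant-depth gadget implementing the compactified predicate ($f=0 \leadsto f^{2}\le\eps_i$, $f>0 \leadsto f\ge\delta_i$, $f<0 \leadsto f\le -\delta_i$), and conjoining the bounding condition $|\x|^{2}\le 1/\delta_i$ (computed in depth $O(\log n)$ by a balanced tree). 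The coupling of Boolean inputs at each selection gate enforced by Lemma~\ref{le:selproc} ensures that the piecewise polynomials computed downstream are exactly the ones prescribed by the compactified formula $B(\mathcal{N}'')_{\delta_i,\eps_i}$ defining $\Sigma_{\delta_i,\eps_i}$. Combining the $m+1=O(n)$ sub-networks by a balanced $\lor$-tree of depth $O(\log n)$ then produces $\mathcal{N}^{*}$ with $d(\mathcal{N}^{*})\le C_1\,d(\mathcal{N})+C_2\log n$ for absolute constants $C_1,C_2$.

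Since $T$ is compact (hence locally closed) and satisfies ${\rm b}^{BM}(T)={\rm b}(T)={\rm b}(\Sigma)$, Proposition~\ref{pr:Mon2} applied to $\mathcal{N}^{*}$ gives
$$d(\mathcal{N}^{*})=\Omega\!\left(\sqrt{\frac{\log {\rm b}(\Sigma)}{n}}\right).$$
Solving the resulting inequality $C_1\,d(\mathcal{N})+C_2\log n \ge c_0\sqrt{(\log {\rm b}(\Sigma))/n}$ for $d(\mathcal{N})$ yields the stated bound. I expect the main obstacle to be verifying the correctness of $\mathcal{N}^{*}$: on the ``gap'' regions created by the compactification, where both the compactified Boolean input of a selection gate and its explicit complement are simultaneously false, one must ensure the downstream sign gates of $\mathcal{N}^{*}$ do not spuriously accept. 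The preparation carried out in Sections~\ref{sec:elim} and \ref{sec:mod}---in particular the dual selection gate configuration of Figure~5---was designed precisely to make this bookkeeping tractable.
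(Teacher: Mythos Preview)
Your proposal is correct and follows essentially the same approach as the paper: build $\mathcal{N}''$ via Lemmas~\ref{le:negation} and~\ref{le:selproc}, construct (as in Lemmas~\ref{le:deltaeps} and~\ref{le:upper}) a network of depth $O(d(\mathcal{N})+\log n)$ accepting the compact approximation $T_m(\Sigma)$, invoke Proposition~\ref{pr:Mon2} on that compact set, and transfer the bound back via Proposition~\ref{pr:main}. The obstacle you flag---that on the gap region $\neg(C_{\delta,\eps}\lor C'_{\delta,\eps})$ the modified network's piecewise polynomials need not match those of the compactified formula---is exactly the issue the paper resolves by proving equivalence only after conjunction with $U_{\delta,\eps}$, which kills the gap; your one slightly loose sentence (``exactly the ones prescribed'') should be read modulo that restriction, but you clearly anticipate this.
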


We first prove two auxiliary lemmas.

Let $\mathcal N$ be an arithmetic network, testing membership in a semi-algebraic set $\Sigma \subset \Real^n$.
Consider networks ${\mathcal N}'$ and ${\mathcal N}''$, defined in Section~\ref{sec:elim} and \ref{sec:mod}, which are
equivalent to $\mathcal N$.
By Lemma~\ref{le:selproc}, $d({\mathcal N}'')=O(d(\mathcal N))$, hence it is sufficient to prove Theorem~\ref{th:general}
for the network ${\mathcal N}''$.

Note that ${\mathcal N}''$ has no Boolean gates labelled by $\neg$.
According to Section~\ref{sec:denotational}, networks ${\mathcal N}'$ and ${\mathcal N}''$ have associated Boolean formulae
$B({\mathcal N}')$ and $B({\mathcal N}'')$ respectively, such that $\Sigma$ coincides with the set of all points in $\Real^n$
satisfying each of $B({\mathcal N}')$ and $B({\mathcal N}'')$.
In addition, these formulae have no $\neg$-symbols.
Observe that $B({\mathcal N}'')$ is of the kind $B({\mathcal N}') \land U$, where $U$ is the conjunction of an identically true
Boolean formulae of the type $A \lor A'$ from the construction of ${\mathcal N}''$ (see Fig.~3.1 (b)),
for all selection gates in ${\mathcal N}$.

Denote by $A_{\delta, \eps}$ (respectively, by $U_{\delta, \eps}$) the Boolean formula obtained from $B({\mathcal N}'')$
(respectively, from $U$) by the rules in Section~\ref{sub:approx}, Case~1, the subcase of a bounded set.
Let $\Sigma_{\delta, \eps}$ be the set of all points in $\Real^n$ satisfying $A_{\delta, \eps}$.

\begin{lemma}\label{le:deltaeps}
There exists an arithmetic network ${\mathcal N}_{\delta, \eps}$, testing membership in
$\Sigma_{\delta, \eps}$, such that $d({\mathcal N}_{\delta, \eps}) \le c_1 d({\mathcal N}'')+ c_2 \log n$ for
some positive constants $c_1,\ c_2$.
\end{lemma}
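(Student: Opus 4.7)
The plan is to build $\mathcal{N}_{\delta,\eps}$ by a local, constant-depth substitution at every sign gate of $\mathcal{N}''$, together with a single parallel subnetwork of depth $O(\log n)$ that enforces the boundedness of the ambient ball $G$. I first introduce two new constant gates labelled $\delta$ and $\eps$ at depth $0$, available for reuse anywhere.

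For each sign gate $v$ of $\mathcal{N}''$ with parent $w$ computing the piecewise polynomial $f$, I delete $v$ and install a short gadget sitting on top of $w$, exactly realising the approximated atomic subformula prescribed in Section~\ref{sub:approx}. If $v$ is labelled $>$, the gadget computes $f-\delta$ by one subtraction, then attaches two fresh sign gates testing $f-\delta>0$ and $f-\delta=0$ combined by an $\lor$-gate, so its associated Boolean formula is $f\geq\delta$. If $v$ is labelled $<$, the symmetric gadget uses $f+\delta$ and realises $f\leq-\delta$. If $v$ is labelled $=$, the gadget computes $f^2-\eps$ by one multiplication and one subtraction, then realises $f^2-\eps\leq 0$ via the same two-sign-gates-plus-$\lor$ pattern. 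All outgoing edges of $v$ are re-attached to the $\lor$-gate of the gadget, and each gadget has depth at most $d(w)+O(1)$.

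Because $\mathcal{N}''$ contains no $\neg$-gates, every atomic subformula of $B(\mathcal{N}'')$ appears unnegated, so these local substitutions assemble, by the rules of Section~\ref{sec:denotational}, into exactly the formula $A_{\delta,\eps}$. The explicit pairing, at every selection gate, of the Boolean input $A$ with an independent formula $A'$ equivalent to $\neg A$, built in Section~\ref{sec:mod}, is crucial here: under the substitution $A$ and $A'$ are replaced independently by $A_{\delta,\eps}$ and $A'_{\delta,\eps}$, whereas the implicit complement $\neg A_{\delta,\eps}$ would in general differ from $A'_{\delta,\eps}$. To ensure that membership is tested inside the bounding ball $G$ of the bounded subcase, I append in parallel a subnetwork computing $|\mathbf{x}|^2=x_1^2+\cdots+x_n^2$ by squaring the $n$ inputs (depth $1$) and summing with a balanced binary tree of additions (depth $\lceil\log_2 n\rceil$); I then apply the same $\leq$-gadget to test $R^2-|\mathbf{x}|^2\geq 0$ for an appropriate constant $R$, and conjoin the result with the main output by a final $\land$-gate. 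This subnetwork contributes depth $O(\log n)$.

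Putting the two steps together, the sign-gate replacements multiply the depth of $\mathcal{N}''$ by a constant factor, giving $c_1\,d(\mathcal{N}'')$, and the ball-check contributes $c_2\log n$, yielding $d(\mathcal{N}_{\delta,\eps})\leq c_1\,d(\mathcal{N}'')+c_2\log n$. The only nontrivial point is the bookkeeping to verify that the formula synthesised by the modified network is word-for-word $A_{\delta,\eps}$; this follows by induction on gate depth using the rules of Section~\ref{sec:denotational}, and hinges precisely on the two preparatory transformations of Sections~\ref{sec:elim} and \ref{sec:mod} that eliminate $\neg$-gates and make the complementary branch of each selection gate explicit.
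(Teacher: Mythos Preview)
Your construction is the same as the paper's: local constant-depth gadgets at every sign gate, plus an $O(\log n)$ subnetwork for the ball constraint. The gap is in your correctness argument. You assert that the local substitutions ``assemble \ldots\ into exactly the formula $A_{\delta,\eps}$'' and that the synthesised formula is ``word-for-word $A_{\delta,\eps}$''. This is false, and the paper points it out explicitly (see Example~\ref{ex:boolean}): the formula $B(\mathcal M_{\delta,\eps})$ computed by the modified network is only \emph{equivalent} to $A_{\delta,\eps}$, not identical to it.

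The discrepancy arises at selection gates. After the sign-gate substitution, a selection gate still functions as a selection gate: given Boolean input $\widehat C_{\delta,\eps}$, it outputs $g$ on $\{\widehat C_{\delta,\eps}\}$ and $0$ on $\{\neg\widehat C_{\delta,\eps}\}$ --- the implicit complement is produced by the gate mechanism itself and cannot be suppressed. Its partner outputs $h$ on $\{\widehat C'_{\delta,\eps}\}$ and $0$ on $\{\neg\widehat C'_{\delta,\eps}\}$. Since after the substitution $\widehat C'_{\delta,\eps}$ is no longer equivalent to $\neg\widehat C_{\delta,\eps}$, the sum carries an extra branch, with value $0$, on the ``gap'' region $\{\neg\widehat C_{\delta,\eps}\land\neg\widehat C'_{\delta,\eps}\}$. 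That spurious branch propagates downstream into $B(\mathcal M_{\delta,\eps})$ and is absent from $A_{\delta,\eps}$. The pairing from Section~\ref{sec:mod} does not prevent this; what it \emph{does} do is ensure that the conjunct $C_{\delta,\eps}\lor C'_{\delta,\eps}$ appears in $U_{\delta,\eps}$, and since $U_{\delta,\eps}$ is itself a conjunct of $A_{\delta,\eps}$ and of $B(\mathcal M_{\delta,\eps})$, the gap region is excluded from both sets. Your induction must therefore prove the weaker statement that $\widehat B_{\delta,\eps}\land U_{\delta,\eps}$ is equivalent to $B_{\delta,\eps}\land U_{\delta,\eps}$ (and correspondingly that the piecewise polynomials agree when restricted to $\{U_{\delta,\eps}\}$), not that the formulas coincide.
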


\begin{proof}
For all sign gates $v$ in ${\mathcal N}''$, taken in arbitrary order, do the following.
Let $v$ be labelled by a sign $\sigma \in \{ <,=,> \}$.
Its parent $w$ is either an input gate, or a constant gate, or an arithmetic gate, or a selection gate,
and has an associated piecewise polynomial $f$ defined on a partition.
Replace the directed edge $(w,v)$ by a directed graph, depending on
the sign $\sigma$, as shown on Fig.~6, 7 and 8.

Note that for this we may need to introduce up to two additional {\em constant gates} labelled by $\eps$ or
$\delta$, depending on $\sigma$.
It is easy to see that the result of all the replacements is an arithmetic network, denote it by
${\mathcal M}_{\delta, \eps}$, herewith $d({\mathcal M}_{\delta, \eps}) =O(d({\mathcal N}'')$.

There is the following injective map $L$ from the set of all gates of ${\mathcal N}''$ to the set of all gates
of ${\mathcal M}_{\delta, \eps}$.
All vertices in the graph of ${\mathcal N}''$ that are not replaced in the construction of
${\mathcal M}_{\delta, \eps}$ (i.e., all gates except sign gates) are mapped identically.
Sign gates of ${\mathcal N}''$ with associated Boolean formulae $f=0,\ f>0$ and $f<0$ are mapped to Boolean
$\lor$-gates of ${\mathcal M}_{\delta, \eps}$ with associated Boolean formulae $(f^2 -\eps=0) \lor (f^2- \eps=0)$,
$(f - \delta>0) \lor (f- \delta=0)$ and $(-f- \delta>0) \lor (-f- \delta=0)$ respectively.

\begin{figure}[hbt]
       \centerline{
          \scalebox{0.3}{
             \includegraphics{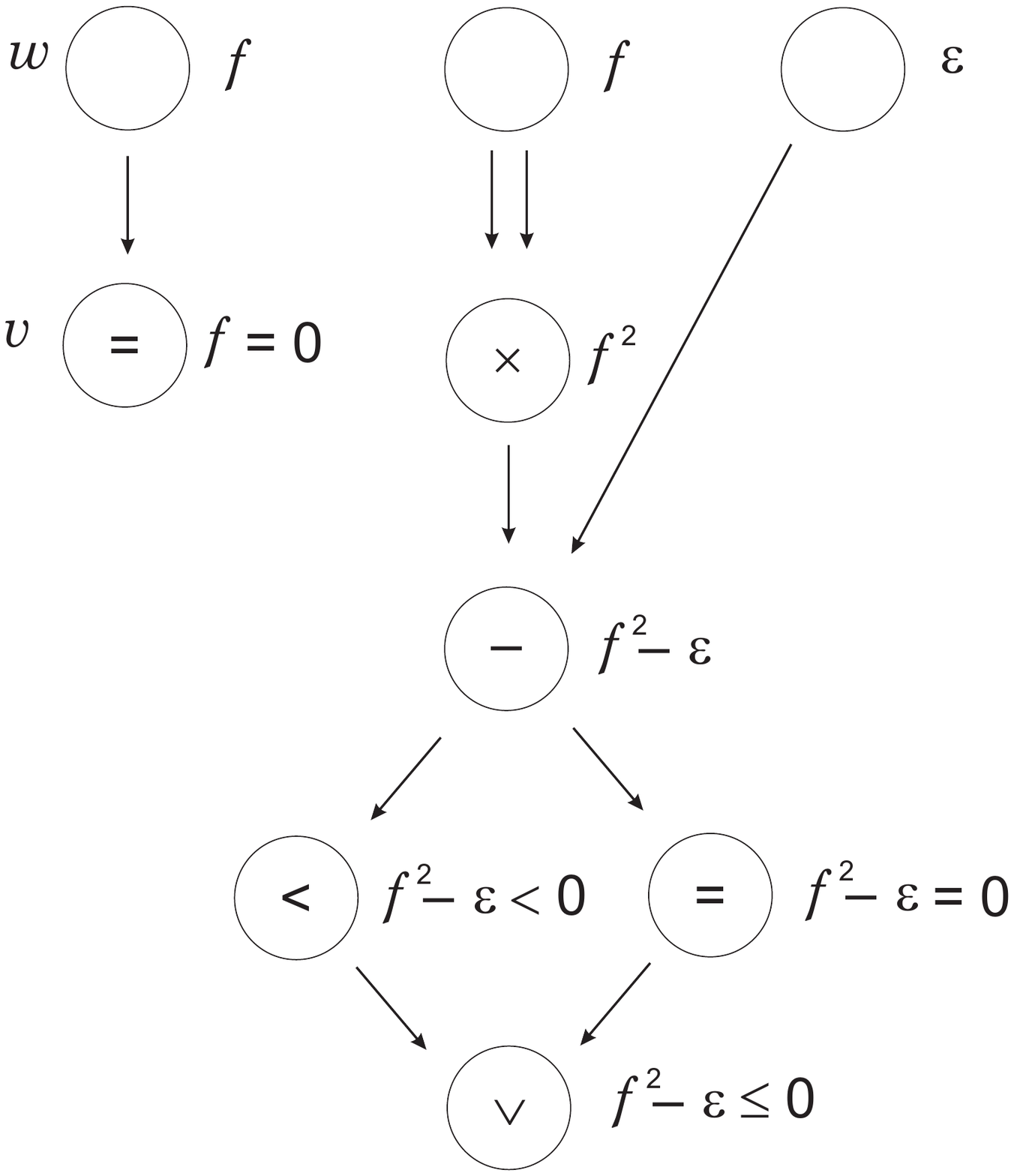}
             }
           }
\caption{ }
\label{fig:network1}
\end{figure}

\begin{figure}[hbt]
       \centerline{
          \scalebox{0.3}{
             \includegraphics{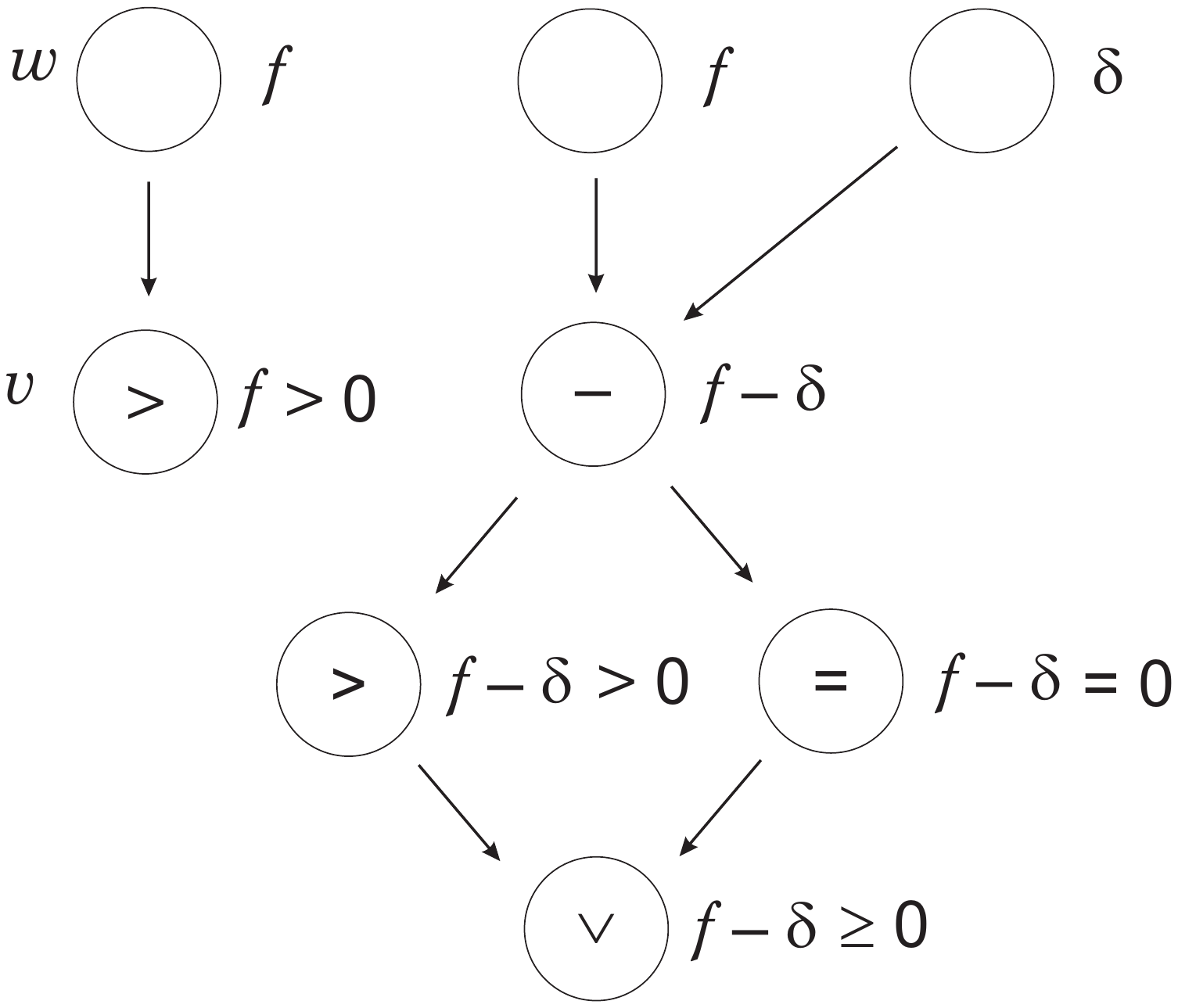}
             }
           }
\caption{ }
\label{fig:network1}
\end{figure}

\begin{figure}[hbt]
       \centerline{
          \scalebox{0.3}{
             \includegraphics{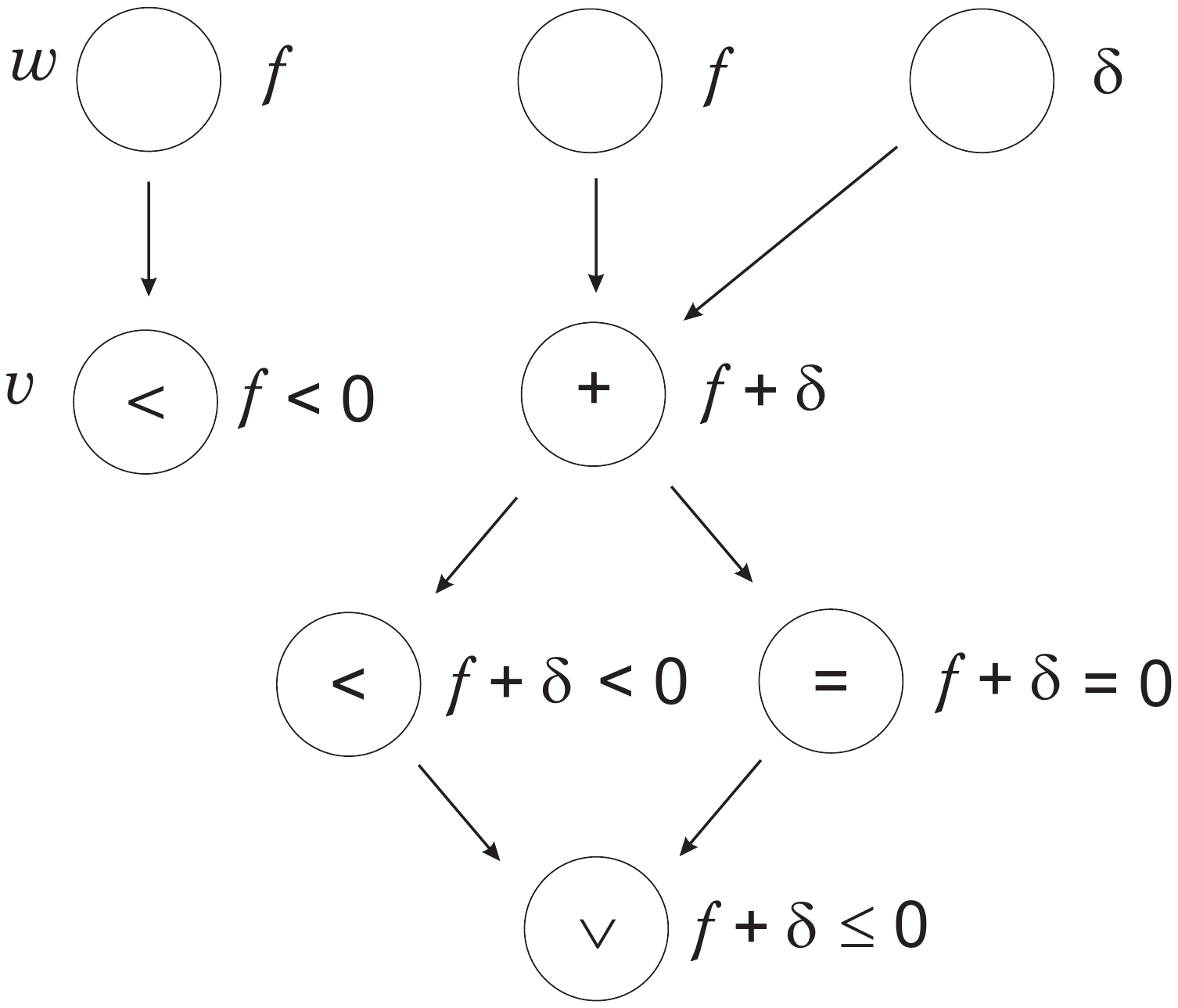}
             }
           }
\caption{ }
\label{fig:network1}
\end{figure}

Let $B({\mathcal M}_{\delta, \eps})$ be the Boolean formula associated with ${\mathcal M}_{\delta, \eps}$.
We prove that $A_{\delta, \eps}$ is equivalent to $B({\mathcal M}_{\delta, \eps})$:
$$
\begin{CD}
{\mathcal N}'' @>\text{Fig.~6, 7, 8}>> {\mathcal M}_{\delta, \eps}\\
@V\text{Section~\ref{sec:denotational}}VV  @.\\
B({\mathcal N}'') @. @V\text{Section~\ref{sec:denotational}}VV\\
@V\text{Section~\ref{sub:approx}}VV @.\\
A_{\delta, \eps} @>\text{equivalent}>> B({\mathcal M}_{\delta, \eps})
\end{CD}
$$

Note that $A_{\delta, \eps}$ and $B({\mathcal M}_{\delta, \eps})$ are not necessarily identical
(see Example~\ref{ex:boolean} below).

Consider a gate $v$ in ${\mathcal N}''$ having the depth $\ell$.
Depending on the type of $v$, it has either an associated Boolean formula $B$ or an associated piecewise polynomial $f$,
represented by a list $F:=f_1, \ldots ,f_k; B_1, \ldots ,B_k$ of polynomials and Boolean formulae for
elements of the partition.
In the case of a Boolean formula, let $B_{\delta, \eps}$ be the Boolean formula obtained from $B$
by the rules in Section~\ref{sub:approx}, Case~1.
In the case of a piecewise polynomial, let $F_{\delta, \eps}$ be the list
$f_1, \ldots ,f_k; B_{1,\delta, \eps}, \ldots ,B_{k,\delta, \eps}$, where for each $i$,
$B_{i, \delta, \eps}$ is the Boolean formula obtained from $B_i$
by the rules in Section~\ref{sub:approx}, Case~1.
Notice that sets $B_{1,\delta, \eps}, \ldots ,B_{k,\delta, \eps}$ are pairwise disjoint, hence
$F_{\delta, \eps}$ represents a {\em partial} piecewise polynomial $\overline f$ defined in their union.
We prove by induction on $\ell$ that, depending on the type of $v$, the gate $L(v)$ in
${\mathcal M}_{\delta, \eps}$ either has the associated Boolean formula $\widehat B_{\delta, \eps}$ such that
$\widehat B_{\delta, \eps} \land U_{\delta, \eps}$ is equivalent to
$B_{\delta, \eps} \land U_{\delta, \eps}$, or has the associated piecewise polynomial $\widehat f$, defined by a list
$\widehat F_{\delta, \eps}$, such that its restriction $\widehat f|_{U_{\delta, \eps}}$ coincides with
$\overline f|_{U_{\delta, \eps}}$.

The base of induction, for $\ell =0$, is trivial.

On the inductive step, let $v$ be a sign gate with an associated Boolean formula $f\ \sigma\ 0$, where $f$ is
a piecewise polynomial.
Let, for definiteness, $\sigma$ be $=$.
Boolean formula $B_{\delta, \eps}$ is then of the form $(\overline f)^2 - \eps \le 0$,
where $\overline f$ is a partial piecewise polynomial.
Observe that $L(v)$ is a Boolean gate in ${\mathcal M}_{\delta, \eps}$ with the associated formula $\widehat B_{\delta, \eps}$
of the form $((\widehat f)^2- \eps<0) \lor ((\widehat f)^2- \eps=0)$ where $\widehat f$ is a piecewise polynomial.
By the inductive hypothesis, $\widehat f|_{U_{\delta, \eps}}$ coincides with $\overline f|_{U_{\delta, \eps}}$.
Hence $\widehat B_{\delta, \eps} \land U_{\delta, \eps}$ is equivalent to $B_{\delta, \eps} \land U_{\delta, \eps}$.
This completes the inductive step for a sign gate $v$.

Cases when $v$ is an arithmetic or Boolean gate are analogous.

Let $v$ and $v'$ be some paired selection gates of ${\mathcal N}''$, and $w,\ w'$ be their respective Boolean parents.
Let the arithmetic input of $v$ (respectively, of $v'$) be $(g,0)$ (respectively, $(h,0)$).
Here $g$ and $h$ are piecewise polynomials defined by some lists
$g_1, \ldots ,g_\alpha; G_1, \ldots ,G_\alpha$ and $h_1, \ldots ,h_\beta; H_1, \ldots ,H_\beta$ respectively.
Let $C$ (respectively, $C'$) be the Boolean formula associated with $w$ (respectively, with $w'$).
Both $C$ and $C'$ don't have the $\neg$-symbol, and one is equivalent to the negation of another.
Let for each $i$ Boolean formulae $G_{i,\delta, \eps}$ and $H_{i, \delta, \eps}$ be obtained from $G_i$ and $H_i$
respectively by the rules in Section~\ref{sub:approx}.
Let $C_{\delta, \eps}$ and $C'_{\delta, \eps}$ be Boolean formulae obtained from $C$ and $C'$ respectively
by the rules in Section~\ref{sub:approx}.

The arithmetic output of the pair $v,\ v'$ is the piecewise polynomial $f$ represented by the list
$$F:=g_1, \ldots ,g_\alpha, h_1, \ldots ,h_\beta; G_1 \land C, \ldots ,G_\alpha \land C,
H_1 \land C', \ldots ,H_\beta \land C'.$$
Hence the corresponding list $F_{\delta, \eps}$ for $\overline f$ is
$$g_1, \ldots ,g_\alpha, h_1, \ldots ,h_\beta; G_{1, \delta, \eps} \land C_{\delta, \eps}, \ldots ,
G_{\alpha, \delta, \eps} \land C_{\delta, \eps}, H_{1, \delta, \eps} \land C'_{\delta, \eps}, \ldots ,
H_{\beta, \delta, \eps} \land C'_{\delta, \eps}.$$
Note that the disjunction $(C_{\delta, \eps} \lor C'_{\delta, \eps})$ will appear as a conjunction member in
the Boolean formula $A_{\delta, \eps}$.

On the other hand, for the selection gates $L(v)=v$, $L(v')=v'$ in ${\mathcal M}_{\delta, \eps}$
the arithmetic output of the pair $v,\ v'$ is the piecewise polynomial $\widehat f$ represented by the list
$$\widehat F_{\delta, \eps}:= g_1, \ldots ,g_\alpha, h_1, \ldots ,h_\beta, 0;$$
$$\widehat G_{1, \delta, \eps} \land \widehat C_{\delta, \eps}, \ldots ,
\widehat G_{\alpha, \delta, \eps} \land \widehat C_{\delta, \eps}, \widehat H_{1, \delta, \eps} \land
\widehat C'_{\delta, \eps}, \ldots , \widehat H_{\beta, \delta, \eps} \land \widehat C'_{\delta, \eps},
\neg (\widehat C_{\delta, \eps} \lor \widehat C'_{\delta, \eps}).$$
By the inductive hypothesis, $\widehat G_{i, \delta, \eps} \land U_{\delta, \eps}$ is equivalent to
$G_{i, \delta, \eps} \land U_{\delta, \eps}$,
$\widehat H_{i, \delta, \eps} \land U_{\delta, \eps}$ is equivalent to $H_{i, \delta, \eps} \land U_{\delta, \eps}$
for every $i$, $\widehat C_{\delta, \eps} \land U_{\delta, \eps}$ is equivalent to
$C_{\delta, \eps} \land U_{\delta, \eps}$, and
$\widehat C'_{\delta, \eps} \land U_{\delta, \eps}$ is equivalent to $C'_{\delta, \eps} \land U_{\delta, \eps}$.

Observe that the Boolean formula $U_{\delta, \eps}$ is a conjunction, with one of its members being
$(C_{\delta, \eps} \lor C'_{\delta, \eps})$.
Taking conjunctions of Boolean formulae in both lists, $F_{\delta, \eps}$ and $\widehat F_{\delta, \eps}$,
with $U_{\delta, \eps}$, we get, in particular, an empty set defined by the formula
$$\neg (\widehat C_{\delta, \eps} \lor \widehat C'_{\delta, \eps}) \land
(C_{\delta, \eps} \lor C'_{\delta, \eps}).$$
Thus, the restrictions of $\overline f$ and $\widehat f$ to $U_{\delta, \eps}$ coincide.
This completes the induction step in the case of a selection gate $v$.

On the last induction step we get the equivalence of the sets $A_{\delta, \eps} \land U_{\delta, \eps}=A_{\delta, \eps}$
and $B({\mathcal M}_{\delta, \eps}) \land U_{\delta, \eps}=B({\mathcal M}_{\delta, \eps})$.

To complete the construction of ${\mathcal N}_{\eps, \delta}$, we need to add the condition
$$|(x_1, \ldots ,x_n)|^2 \le 1/\delta$$
for the input vector $(x_1, \ldots ,x_n)$.
Using the dichotomy, we compute the sum $x_1^2+ \cdots +x_n^2$ with the depth $O(\log n)$.
Squaring and comparing with $\/ \delta$ requires additional constant depth.
Attach the resulting graph to ${\mathcal M}_{\delta, \eps}$ in a straightforward way.
The result is the sought network ${\mathcal N}_{\eps, \delta}$.
\end{proof}

The following example illustrates Lemma~\ref{le:deltaeps} and its proof.

\begin{example}\label{ex:boolean}
Consider the network $\mathcal N$ on Fig~9.
Clearly,
$$
B({\mathcal N})=((g=0 \land f^2>0) \lor (h=0 \land f=0)).
$$
A network obtained from it by means of Lemma~\ref{le:selproc} is shown on Fig.~10, denote it by ${\mathcal N}''$.
In this case,
$$
B({\mathcal N}'')=((g=0 \land f^2>0) \lor (h=0 \land f=0)) \land (f^2>0  \lor f=0)
$$
and
$$
A_{\delta, \eps}=((g^2 \le \eps \land f^2 \ge \delta) \lor (h^2 \le \eps \land f^2 \le \eps))
\land (f^2 \ge \delta \lor f^2 \le \eps).
$$
Here the Boolean formula $U$ is $f^2>0  \lor f=0$, while $U_{\delta, \eps}$ is $f^2 \ge \delta \lor f^2 \le \eps$.

The network ${\mathcal M}_{\delta, \eps}$ is not shown on a picture, but one can write out
$$
B({\mathcal M}_{\delta, \eps})=((g^2 \le \eps \land f^2 \ge \delta) \lor (h^2 \le \eps \land f^2 \le \eps)
\lor (0 \le \eps \land f^2 < \delta \land f^2 > \eps)) \land
$$
$$
\land (f^2 \ge \delta \lor f^2 \le \eps).
$$
The last term, $f^2 \ge \delta \lor f^2 \le \eps$, of $B({\mathcal M}_{\delta, \eps})$ kills the
third term, $0 \le \eps \land f^2 < \delta \land f^2 > \eps$, in the first disjunction, thus
$B({\mathcal M}_{\delta, \eps})$ is equivalent to $A_{\delta, \eps}$.
\end{example}

\begin{figure}[hbt]
       \centerline{
          \scalebox{0.3}{
             \includegraphics{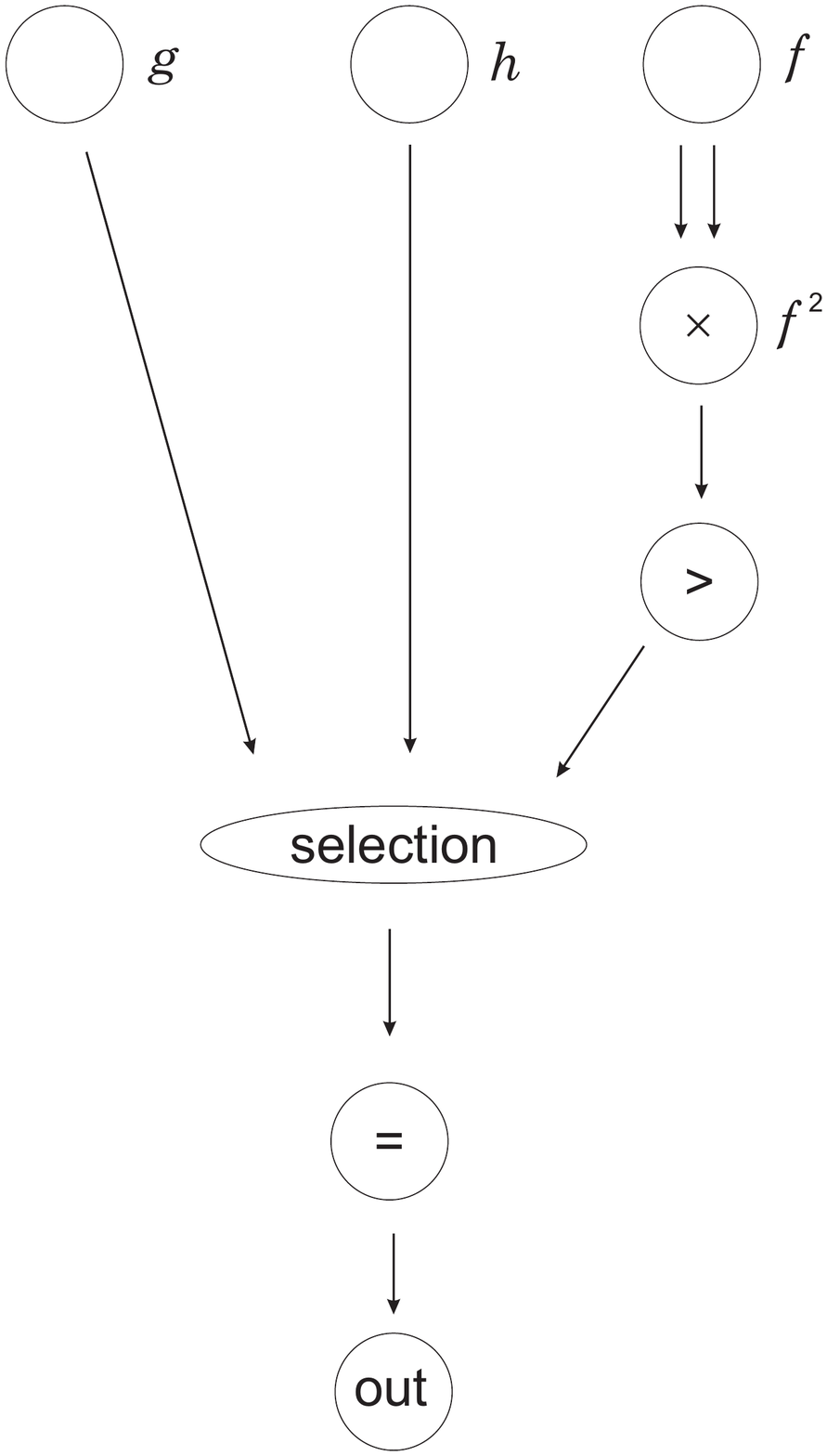}
             }
           }
\caption{ }
\label{fig:network1}
\end{figure}

\begin{figure}[hbt]
       \centerline{
          \scalebox{0.3}{
             \includegraphics{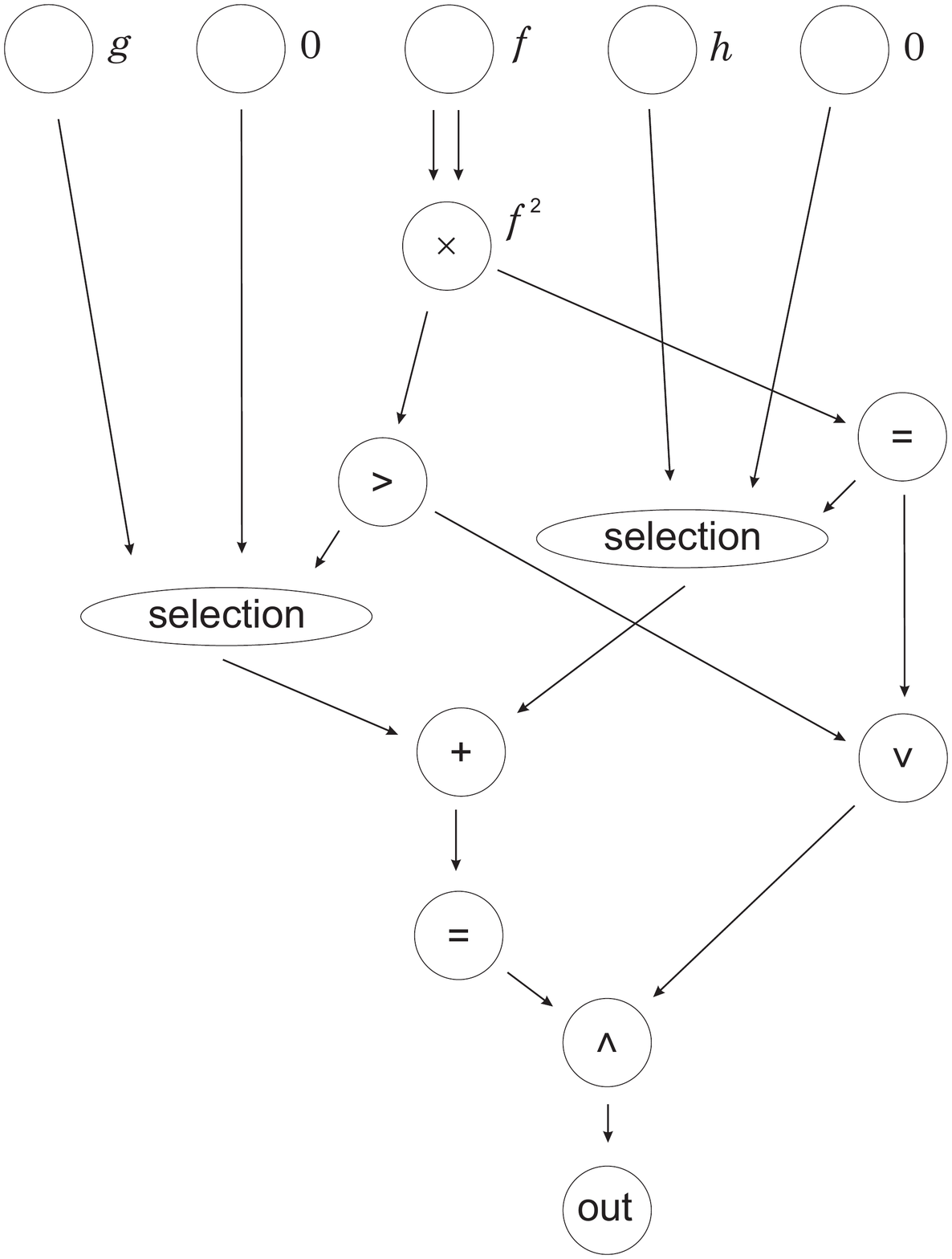}
             }
           }
\caption{ }
\label{fig:network1}
\end{figure}

\begin{lemma}\label{le:upper}
Let $\mathcal N$ be an arithmetic network testing membership in a semi-algebraic set $\Sigma \subset \Real^n$.
Then there exist an arithmetic network ${\mathcal T}$, testing membership in
$T_n(\Sigma)$, such that $d({\mathcal T}) \le c_1 d({\mathcal N})+ c_2 \log n$ for
some positive constants $c_1,\ c_2$.
\end{lemma}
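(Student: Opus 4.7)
The plan is to construct $\mathcal{T}$ as a disjunction of $n+1$ copies of the network produced by Lemma~\ref{le:deltaeps}. Recall that by definition
$$T_n(\Sigma) = \Sigma_{\delta_0,\eps_0} \cup \Sigma_{\delta_1,\eps_1} \cup \cdots \cup \Sigma_{\delta_n,\eps_n},$$
where the parameters satisfy $0 < \eps_0 \ll \delta_0 \ll \cdots \ll \eps_n \ll \delta_n \ll 1$. So testing membership in $T_n(\Sigma)$ is equivalent to testing, in parallel, membership in each $\Sigma_{\delta_i,\eps_i}$ and then combining the results via $\lor$.

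First I would fix a concrete sequence of rational constants $\eps_0, \delta_0, \ldots, \eps_n, \delta_n$ satisfying the sufficient-smallness conditions required by Proposition~\ref{pr:main}; these will be hard-wired into the network as constant gates. Next, for each $i = 0, \ldots, n$, I would apply Lemma~\ref{le:deltaeps} with parameter pair $(\delta_i, \eps_i)$ to obtain an arithmetic network $\mathcal{N}_{\delta_i,\eps_i}$ testing membership in $\Sigma_{\delta_i,\eps_i}$ with depth at most $c_1 d(\mathcal{N}) + c_2 \log n$ for some absolute positive constants $c_1, c_2$. All $n+1$ such networks share the same $n$ input gates $x_1, \ldots, x_n$ but are otherwise disjoint; they are computed in parallel.

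Finally, I would feed the $n+1$ Boolean output values into a binary tree of $\lor$-gates (a dichotomy), which has depth $\lceil \log_2(n+1) \rceil = O(\log n)$, and declare the root of this tree the output gate of $\mathcal{T}$. By construction $\mathcal{T}$ accepts an input $\x \in \Real^n$ iff $\x \in \Sigma_{\delta_i,\eps_i}$ for some $i$, i.e.\ iff $\x \in T_n(\Sigma)$. The depth of $\mathcal{T}$ is bounded by
$$d(\mathcal{T}) \le \bigl(c_1 d(\mathcal{N}) + c_2 \log n\bigr) + O(\log n) \le c_1' d(\mathcal{N}) + c_2' \log n,$$
which matches the claim after renaming constants.

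I do not anticipate any serious obstacle: Lemma~\ref{le:deltaeps} already does the real work of turning the original network into an approximating one at the cost of an additive $O(\log n)$, and the only remaining issue is parallelising $n+1$ such computations, which is handled by a standard dichotomy. The only point to watch is that the constants $\eps_i, \delta_i$ used in the approximation are chosen once and for all (independently of the input), so they can be introduced as constant gates without affecting the depth bound.
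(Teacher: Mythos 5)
Your proposal is correct and follows essentially the same route as the paper: take $n+1$ copies of ${\mathcal N}_{\delta,\eps}$ from Lemma~\ref{le:deltaeps} with hard-wired constants $\eps_i,\delta_i$, share the input gates, and combine the outputs with a binary tree of $\lor$-gates of depth $O(\log n)$. No gaps.
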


\begin{proof}
Consider $n+1$ copies of ${\mathcal N}_{\delta, \eps}$ from Lemma~\ref{le:deltaeps}, sharing the same set of input gates.
In the $i$-th copy of ${\mathcal N}_{\delta, \eps}$ ($i=0, \ldots n$) replace all occurrences of $\eps$
(respectively, $\delta$) by $\eps_i$ (respectively, $\delta_i$) to obtain ${\mathcal N}_{\delta_i, \eps_i}$.
Note that this requires adding $2(n+1)$ new constant gates, $\eps_0, \delta_0, \ldots, \eps_n, \delta_n$.
Each ${\mathcal N}_{\delta_i, \eps_i}$ has a single output gate.
Collect these gates in one output using a binary tree of depth $O(\log n)$ with $\lor$-Boolean gates.
As a result we obtain the arithmetic network ${\mathcal T}$ accepting the set $T_n(\Sigma)$, and such that
$d({\mathcal T}) \le c_1 d({\mathcal N})+ c_2 \log n$ for some positive constants $c_1,\ c_2$.
\end{proof}

\begin{proof}[Proof of Theorem~\ref{th:general}]
Let $\mathcal T$ be the network constructed in Lemma~\ref{le:upper}.
By Proposition~\ref{pr:Mon2}, since $T_{n}(\Sigma)$ is compact, we have
$$d({\mathcal T}) = \Omega \left(\sqrt{ \frac{\log ({\rm b}(T_n(\Sigma)))}{n}}\right).$$
Therefore, by Proposition~\ref{pr:main},
$$d({\mathcal T}) = \Omega \left(\sqrt{ \frac{\log ({\rm b}(\Sigma))}{n}}\right).$$
By Lemma~\ref{le:upper}, we have $d({\mathcal T}) \le c'_1 d({\mathcal N})+ c'_2 \log n$ for some positive
constants $c'_1,\ c'_2$, hence
$$d(\mathcal N) \ge c_1 \sqrt{ \frac{\log ({\rm b}(\Sigma))}{n}} - c_2 \log n$$
for some positive constants $c_1,\ c_2$.
\end{proof}

\begin{theorem}\label{th:proj}
Let $\mathcal N$ be an arithmetic network testing membership in a semi-algebraic set $\Sigma \subset \Real^n$.
Let $\rho:\> \Real^n \to \Real^{n-r}$, for some $r=0, \ldots , n$, be the projection map.
Then
\begin{equation}\label{eq:th}
d({\mathcal N}) \ge  c_1  \frac{\sqrt{\log ({\rm b}(\rho(\Sigma)))}}{n} - c_2 \log n
\end{equation}
for some positive $c_1,\ c_2$.
\end{theorem}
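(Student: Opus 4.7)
The strategy combines the compactification approach from Theorem~\ref{th:general} with the fibered-product bound of Proposition~\ref{pr:map} applied to the projection $\rho$. By Lemma~\ref{le:upper}, there is an arithmetic network $\mathcal T$ on $\Real^n$ testing membership in the compact set $T_n(\Sigma)$, with $d(\mathcal T) \le c_1 d(\mathcal N) + c_2 \log n$. The image $\rho(T_n(\Sigma))$ is compact, and by the commutation identity~(\ref{eq:rho}) of Case~2 in Section~\ref{sub:approx}, it equals $T_n(\rho(\Sigma))$. Proposition~\ref{pr:main}(i) applied to $\rho(\Sigma)$ then gives ${\rm b}(\rho(\Sigma)) \le {\rm b}(T_n(\rho(\Sigma))) = {\rm b}(\rho(T_n(\Sigma)))$.

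Next, I apply Proposition~\ref{pr:map} to the closed surjective map $\rho|_{T_n(\Sigma)}$ onto its image, obtaining
$${\rm b}_m(\rho(T_n(\Sigma))) \le \sum_{p+q=m} {\rm b}_q(W_p),$$
where $W_p \subset \Real^{n(p+1)}$ is the $(p+1)$-fold fibered product of $T_n(\Sigma)$ over its image. Since ${\rm b}_m(\rho(T_n(\Sigma)))=0$ for $m>n$, summing over $m$ yields ${\rm b}(\rho(T_n(\Sigma))) \le (n+1)\max_{0 \le p \le n} {\rm b}(W_p)$. I then construct an arithmetic network ${\mathcal T}_p$ on $\Real^{n(p+1)}$ testing membership in $W_p$ by running $p+1$ parallel copies of $\mathcal T$ on the $(p+1)$ coordinate blocks and conjoining the outputs with the coordinatewise equalities $\rho(\x_0) = \cdots = \rho(\x_p)$, evaluated via a binary tree. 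This gives $d({\mathcal T}_p) = O(d(\mathcal N) + \log n)$. Since $W_p$ is compact, Proposition~\ref{pr:Mon2} yields $d({\mathcal T}_p) = \Omega\bigl(\sqrt{\log({\rm b}(W_p))/(n(p+1))}\bigr)$, whence $\log {\rm b}(W_p) = O(n^2 (d(\mathcal N) + \log n)^2)$ uniformly in $p \le n$.

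Combining these bounds, $\log {\rm b}(\rho(\Sigma)) = O(\log n + n^2 (d(\mathcal N) + \log n)^2)$; solving for $d(\mathcal N)$ yields the claimed inequality~(\ref{eq:th}). The main obstacle is the careful verification of the commutation $T_n(\rho(\Sigma)) = \rho(T_n(\Sigma))$ for the concrete representation employed by Lemma~\ref{le:upper}, together with confirming that the constructed network ${\mathcal T}_p$ accepts precisely $W_p$ and not some close approximation that could spoil the Betti-number estimate.
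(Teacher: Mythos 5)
Your proposal is correct and follows essentially the same route as the paper: compactify $\Sigma$ to $T_n(\Sigma)$ via Lemma~\ref{le:upper}, bound ${\rm b}(\rho(T_n(\Sigma)))$ by the fibered products $W_p$ using Proposition~\ref{pr:map}, build a low-depth network for the compact $W_p$ from parallel copies of $\mathcal T$, apply Proposition~\ref{pr:Mon2}, and transfer back through $\rho(T_n(\Sigma))=T_n(\rho(\Sigma))$ and Proposition~\ref{pr:main}. The only (harmless) deviations are that you realize $W_p$ in $\Real^{n(p+1)}$ with explicit equality constraints rather than sharing the $n-r$ projected coordinates among the copies, and you absorb the $\log n$ loss by $\sqrt{a+b}\le\sqrt{a}+\sqrt{b}$ instead of the paper's preliminary case split on the growth of $\log({\rm b}(\rho(\Sigma)))$.
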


Let
$$
W_p:=\underbrace{T_n(\Sigma) \times_{\rho(T_n(\Sigma))} \cdots \times_{\rho(T_n(\Sigma))}  T_n(\Sigma)
}_\text{(p+1) {\rm times}},
$$
where $p \le n$.

\begin{lemma}\label{le:proj}
Let ${\mathcal N}$ be an arithmetic network for $\Sigma$.
Then for each $p \le n$ there exists a network ${\mathcal N}^W$, testing membership in $W_p$, such that
$d({\mathcal N}^W) \le c_1d(\mathcal N)+ c_2 \log n$ for some positive $c_1,\ c_2$.
\end{lemma}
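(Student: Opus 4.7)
The plan is to build $\mathcal N^W$ by running $p+1$ parallel copies of a membership tester for $T_n(\Sigma)$ and augmenting with small subcircuits that enforce the fibered-product constraints $\rho(\x_0) = \rho(\x_1) = \cdots = \rho(\x_p)$. Since by definition
$$
W_p = \{(\x_0, \ldots, \x_p) \in T_n(\Sigma)^{p+1} \mid \rho(\x_0) = \cdots = \rho(\x_p)\},
$$
these two ingredients are exactly what one needs to test.

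First I would invoke Lemma~\ref{le:upper} to obtain a network $\mathcal T$ with $n$ input gates that tests membership in $T_n(\Sigma)$ and satisfies $d(\mathcal T) \le \alpha_1 d(\mathcal N) + \alpha_2 \log n$ for suitable positive constants $\alpha_1, \alpha_2$. Next I would declare $(p+1)n$ input gates for $\mathcal N^W$, grouped into blocks representing $\x_0, \ldots, \x_p \in \Real^n$, and place $p+1$ disjoint copies of $\mathcal T$ in parallel, feeding the $i$-th copy with $\x_i$. This produces Boolean outputs $B_0, \ldots, B_p$, with $B_i$ equal to \textbf{true} iff $\x_i \in T_n(\Sigma)$, and contributes only $d(\mathcal T)$ to the depth since the copies are independent.

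Then, since $\rho$ is a coordinate projection killing $r$ coordinates, the condition $\rho(\x_i) = \rho(\x_0)$ is a conjunction of $n-r$ equalities between individual coordinates. In parallel with the previous step, I would add $p(n-r)$ constant-depth gadgets, each consisting of one arithmetic $(-)$-gate followed by a sign gate labelled $=$, to produce Boolean outputs witnessing these coordinate equalities. Finally I would combine all $(p+1) + p(n-r) = O(n^2)$ Boolean outputs with a binary tree of $\land$-gates of depth $O(\log n)$ and route its output to the output gate. The total depth is $d(\mathcal T) + O(1) + O(\log n) \le c_1 d(\mathcal N) + c_2 \log n$, as required.

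There is no serious obstacle here: the construction is a direct assembly of standard gadgets, and correctness is immediate from the definition of $W_p$. The only point that requires any attention is that the additional $\log n$ term from the binary tree and from $\mathcal T$ itself be absorbed into the stated bound, which is straightforward since $p+1 \le n+1$ bounds the number of copies of $\mathcal T$ needed.
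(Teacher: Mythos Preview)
Your proof is correct, but your construction differs from the paper's in one respect worth noting. You give $\mathcal N^W$ a full $(p+1)n$ input gates and then enforce the fibered-product constraint $\rho(\x_0)=\cdots=\rho(\x_p)$ by explicit equality gadgets. The paper instead exploits the fact that $\rho$ is a coordinate projection: it parametrizes $W_p$ directly by the shared $n-r$ base coordinates $X_1,\ldots,X_{n-r}$ together with the $r$ fiber coordinates $Y_{i,n-r+1},\ldots,Y_{i,n}$ for each factor, and feeds every copy of $\mathcal T$ the \emph{same} input gates $X_1,\ldots,X_{n-r}$. The equality constraints are then automatic and no equality gadgets are needed; one only conjoins the $p+1$ outputs of the copies of $\mathcal T$. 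Both constructions yield the same depth bound $d(\mathcal T)+O(\log n)$, so your argument is fine; the paper's version is simply a bit more economical in size and avoids the separate equality subcircuits.
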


\begin{proof}
Lemma~\ref{le:upper} implies that there is an arithmetic network ${\mathcal T}$ testing membership in
$T_n(\Sigma)$ with $d({\mathcal T}) \le c_1d(\mathcal N))+ c_2 \log n$ for some positive $c_1,\ c_2$.
The problem of membership in $W_p$ has input variables
$$X_1, \ldots ,X_{n-r}, Y_{1, n-r+1}, \ldots , Y_{1, n}, \ldots , Y_{p, n-r+1}, \ldots , Y_{p, n}.$$

Construct the network ${\mathcal N}^W$ by taking $p$ copies of ${\mathcal T}$, so that the $i$th copy has
input variables $X_1, \ldots ,X_{n-r}, Y_{i, n-r+1}, \ldots , Y_{i, n}$.
Now using a dichotomy with depth $O(\log p)$ compute the conjunction of Boolean outputs
for all copies of ${\mathcal T}$.

Clearly the depth of the resulting network ${\mathcal N}^W$ is equal to the depth of ${\mathcal T}$ plus
the $O(\log p)$-depth needed to combine $p$ copies of ${\mathcal T}$ into one network, i.e.,
$c'_1 d({\mathcal T})+ c'_2 \log n$ for some positive $c'_1,\ c'_2$.
It follows that $d({\mathcal N}^W) \le c_1d(\mathcal N)+ c_2 \log n$ for some positive $c_1,\ c_2$.
\end{proof}

\begin{proof}[Proof of Theorem~\ref{th:proj}]
If $\log ({\rm b}(\rho(\Sigma))$, considered as a function of $n$, grows asymptotically slower than $n^2$, then
the right hand side of (\ref{eq:th}) is asymptotically negative for suitable positive $c_1,\ c_2$, and we are done.
Thus, assume this is not the case.

By Proposition~\ref{pr:map},
$$
{\rm b}_m(\rho(T_n(\Sigma)) \le \sum_{p+q=m} {\rm b}_q(W_p),
$$
it follows that
$${\rm b}(\rho(T_n(\Sigma))) \le \sum_{0 \le p \le n} {\rm b}(W_p).$$
Let ${\rm b}(W_\nu):= \max_p {\rm b}(W_p)$, then
\begin{equation}\label{eq:proj1}
\frac{{\rm b}(\rho(T_n(\Sigma)))}{n} \le {\rm b}(W_\nu).
\end{equation}

Since $W_\nu$ is compact, by (\ref{eq:Mon2}) we have
$$d({\mathcal N}^W) \ge c \sqrt{ \frac{\log ({\rm b}(W_\nu))}{n+ \nu r}}$$
for some positive constant $c$.
Replacing $d({\mathcal N}^W)$ in this inequality by a larger number
according to Lemma~\ref{le:proj}, we get for each $p \le n$:
$$d(\mathcal N) \ge  c'_1 \sqrt{ \frac{\log ({\rm b}(W_\nu))}{n+ \nu r}} - c'_2 \log n$$
for some positive $c'_1,\ c'_2$.
Then (\ref{eq:proj1}) implies that
$$d(\mathcal N) \ge  c'_1 \sqrt{ \frac{\log ({\rm b}(\rho(T_n(\Sigma)))- \log n}{n^2}} - c'_2 \log n.$$
According to (\ref{eq:rho}), $\rho(T_n(\Sigma))=T_n(\rho(\Sigma))$, hence
$$d(\mathcal N) \ge  c'_1 \sqrt{ \frac{\log ({\rm b}(T_n(\rho(\Sigma)))- \log n}{n^2}} - c'_2 \log n,$$
while, by Proposition~\ref{pr:main}, ${\rm b}(T_n(\rho(\Sigma))) \ge {\rm b}(\rho(\Sigma))$.
It follows that
\begin{equation}\label{eq:proj2}
d(\mathcal N) \ge  c'_1 \sqrt{ \frac{\log ({\rm b}(\rho(\Sigma))- \log n}{n^2}} - c'_2 \log n,
\end{equation}
Since we assumed that $\log ({\rm b}(\rho(\Sigma))$ grows faster than $n^2$, (\ref{eq:proj2}) implies
(\ref{eq:th}) for some positive $c_1,\ c_2$.
\end{proof}

\section{Application}

In this section we apply the bound from Theorem~\ref{th:general} to an example of a specific
computational problem (a particular case of ``Parity of Integers'' problem in \cite{GV14}).
\medskip

{\em For  given three real numbers $x_1, x_2, x_3$, where $1 \le x_i \le n,\ n \in \Z$, decide whether the following
property is true: either all $x_i$, or exactly one of $x_i$, are integer.}
\medskip

To obtain a lower bound, consider the integer lattice $\{1, \ldots, n\}^3$ in $\Real^3$ and let
$\Sigma$ be the union of all open two-dimensional squares and all vertices.
Then the problem is equivalent to deciding membership in $\Sigma$.
Obviously, the complement $\Real^3 \setminus \Sigma$ is connected.
Observe that $\Sigma$ is not locally closed.
It is homotopy equivalent to a two-plane with $\Omega ( n^3)$ punctured points,
so $b_1(\Sigma) = \Omega ( n^3)$.
By Theorem~\ref{th:general}, the depth of any arithmetic network testing membership in $\Sigma$
is $\Omega ( \sqrt{\log n})$ for some positive $c$.

Various simple algorithms provide an upper bound $O(\log n)$ for the problem.
For example, for each of $x_i$ in parallel or sequentially, decide whether or not it's one of the numbers $1, \ldots ,n$,
evaluating the disjunction
$$\bigvee_{1 \le j \le n} (x_i-j=0)$$
by means of dichotomy.
This requires the depth $O(\log n)$.
Let the result be $0$ if $x_i$ is integer and $1$ otherwise.
The network computes, with constant depth, the sum of the results.
Then the answer is Yes if and only if the sum is either 0 or 2.
The depth of the resulting network is $O(\log n)$.
\medskip

\subsection*{Acknowledgements}
We thank Dima Grigoriev, Joachim von zur Gathen and Luis Pardo for discussions of various aspects of
arithmetic networks.

\end{document}